\documentclass[11pt]{article}
\usepackage{graphicx} % Required for inserting images
\usepackage[margin=1in]{geometry}
\usepackage{natbib}
\usepackage{tcolorbox}

\usepackage{xcolor}
\definecolor{darkblue}{RGB}{0, 0, 139} 
\definecolor{redorange}{RGB}{227, 66, 52}
\definecolor{emeraldgreen}{RGB}{0, 155, 119}

\usepackage{enumitem}
\setlist[itemize]{itemsep=0.5pt, topsep=1pt}
\setlist[enumerate]{noitemsep=0.5, topsep=1pt}

\AddToHook{cmd/appendix/before}{\crefalias{section}{appendix}}

\AddToHook{cmd/appendix/before}{\crefalias{section}{appendix}\crefalias{subsection}{appendix}}

\usepackage{titlesec}

\titlespacing*{\section}{0pt}{1.6ex plus 0.9ex minus .2ex}{1ex plus .2ex}

\titlespacing*{\subsection}{0pt}{1.4ex plus 0.9ex minus .2ex}{0.75ex plus .2ex}

\usepackage{xcolor}

\usepackage{booktabs}
\usepackage{bm}
\usepackage{nicematrix}
\usepackage{float}
\usepackage{dsfont}
\usepackage{xspace}
\usepackage{changepage}

\usepackage{tikz}
\usetikzlibrary{fit, backgrounds, positioning, shapes.geometric, decorations.pathreplacing, calligraphy}

\usepackage[ruled]{algorithm2e} % For algorithms

\SetAlFnt{\small}
\SetAlCapFnt{\small}
\SetAlCapNameFnt{\small}
\SetAlCapHSkip{0pt}
\IncMargin{-\parindent}
\usepackage{amsthm}

\makeatletter
\def\th@plain{%
  \thm@notefont{}%
  \itshape
  \thm@headfont{\scshape}%
}
\makeatother

\usepackage{amssymb}
\usepackage{mathrsfs}
\usepackage{color,soul}
\usepackage{mathtools}

\usepackage{multirow}
\usepackage{rotating}
\usepackage{makecell}
\usepackage{caption}

\usepackage{hyperref}
\usepackage{thmtools}
\usepackage{thm-restate}
\usepackage{enumitem}

\usepackage{cleveref}
\usepackage{graphicx} % Required for inserting images

\usepackage[]{color-edits}% Package initialization
\addauthor[Ram]{rd}{blue}
\addauthor[TODO]{todo}{red}

\newtheoremstyle{compact}
    {4pt}
    {4pt}
    {\itshape}
    {}
    {\bfseries}
    {.}
    {.5em}
    {}
\theoremstyle{compact}

\newtheorem{lemma}{Lemma}
\newtheorem*{lemma*}{Lemma}
\newtheorem{observation}{Observation}
\newtheorem{definition}{Definition}

\newtheorem{fact}{Fact}

\crefname{observation}{observation}{observations}
\Crefname{observation}{Observation}{Observations}
\crefname{enumi}{property}{properties}
\Crefname{enumi}{Property}{Properties}

\newcommand{\mbf}[1]{\mathbf{#1}}

\newcommand{\opt}{\textsc{OPT}}

\newcommand{\bl}{\textsc{BL}}

\newcommand{\reg}{\textsc{Reg}}
\newcommand{\bricklay}{\textsc{brick-laying}\xspace}
\newcommand{\base}{\textsc{Bases}}
\newcommand{\rank}{\textsc{Rank}}
\newcommand{\mmo}{\textsc{MMO}\xspace}

\newcommand{\1}[1]{\mathds{1}\left\{#1\right\}}

\newcommand{\floor}[1]{\left\lfloor #1 \right\rfloor}
\newcommand{\pp}[1]{\left( #1 \right)^+}

\newcommand{\wh}[1]{\widehat{#1}}
\newcommand{\wt}[1]{\widetilde{#1}}

\newcommand{\E}{\mathbb{E}}

\newcommand{\R}{\mathbb{R}}
\newcommand{\Z}{\mathbb{Z}}
\newcommand{\calA}{\mathcal{A}}

\newcommand{\calI}{\mathcal{I}}

\newcommand{\calM}{\mathcal{M}}

\newcommand{\calP}{\mathcal{P}}

\newcommand{\bell}{\bm{\ell}}
\newcommand{\define}{\coloneqq}

\title{Optimal Online Equitable Allocation with Indivisible Resources}
\author{
    Ramiro N. Deo-Campo Vuong\thanks{
        Cornell University.
        Email: ramdcv@cs.cornell.edu.
        The author thanks Siddhartha Banerjee, Robert Kleinberg, and Aditya Prasad for their editorial support, suggestions to approach \Cref{lem:mmo:exchange}, and miscellaneous guidance.
    }
}
\date{}

\begin{document}

\maketitle
\thispagestyle{empty}

\begin{abstract}
    Equitable allocation of indivisible goods to agents in online settings is an algorithmic primitive with applications for load balancing, network routing, online marketplaces, and multi-agent systems.
    We consider a general setting in which allocations are constrained to be bases of discrete polymatroids that arrive online.

    Our work demonstrates that a simple, myopic algorithm called \textsc{Brick-Laying}, which greedily minimizes the sum of squared loads on agents, achieves a universal and objective-free notion of optimality called \textit{majorization minimax-optimality} \cite{banerjee2026wf} for this setting.
    As a consequence, \textsc{Brick-Laying} simultaneously guarantees minimax optimal competitive ratios and regret for \emph{all} Schur-concave and Schur-convex objectives, and for any number of agents and resources (despite being agnostic to problem scale).

    Departing from popular primal-dual analysis, we employ majorization to compare allocations.
    We leverage the conjugates of integer partitions -- which act as a discrete dual to majorization -- to characterize worst-case instances for the \textsc{Brick-Laying} algorithm.
    Our approach reveals a novel structural connection between the geometry of partitions and online equitable allocation.
\end{abstract}
\newpage

\setcounter{page}{1}
\section{Introduction}
Equitable allocation of indivisible resources (goods/chores) to agents in online settings is a fundamental algorithmic primitive.
In such problems, resources arrive over time, and must be irrevocably assigned according to some specified combinatorial constraints.
Equity is measured in terms of some function of the allocations, and optimizing it is challenging because assignments are made without knowledge of future constraints.

In this work, we consider the following general variant of this problem: feasible allocations are described by a sequence of discrete polymatroid bases that arrive online.
We model worst-case settings by assuming these allocation constraints are designed by an adaptive adversary who has access to past resource assignments.
Below, we provide some special cases of our model:
\vspace{1pt}

\begin{adjustwidth}{1.5em}{0pt}
    \noindent{\bf Online $b$-Matching \cite{kalyanasundaram2000optimal}:}
    There are $n$ offline nodes.
    When online node $t$ of $m$ total online nodes arrives, it reveals a neighborhood $N_t\subseteq [n]$ and must be irrevocably assigned to a neighboring offline node.
    The goal is to maximize $\sum_{i\in[n]} \min\{b, \bell(i)\}$ where agent $i$ is assigned $\bell(i)$ online nodes and $b$ is the common capacity of offline nodes.
    \vspace{1pt}

    \noindent{\bf Online Batched $b$-Matching \cite{feng2025batching}:}
    This problem is the same as the above, except online nodes arrive together in batches instead of one at a time.
    All online nodes in a batch are irrevocably matched before the arrival of the next batch.
    \vspace{1pt}
    
    \noindent{\bf Online Network Load Balancing:} There are $n$ servers.
    In an online fashion for $h$ rounds, an integral capacity flow network is revealed with servers as sink nodes.
    An integral max flow from sources, representing requests, to servers is selected.
    The goal is to route requests to minimize the sum of server latencies, where the latency of a server routed $l$ total requests is $\frac{1}{2} l (l + 1)$.
\end{adjustwidth}
\vspace{1pt}

Despite the generality of polymatroid constraints and numerous different equity objectives one could consider, these settings admit a simple solution: \bricklay.
This algorithm, which greedily allocates resources to minimize the sum of squared agent loads, satisfies a universal notion of equitable allocation optimality that we call \textit{majorization minimax-optimality} (\mmo).
As a consequence of \mmo, \bricklay guarantees minimax optimal regret and competitive ratios for all reasonable equity-promoting objectives and any number of agents and resources.

The performance of online allocation algorithms is often measured via competitive ratios, defined on one or many objectives.
Despite the numerous celebrated results derived from competitive ratio analysis and objective design, these approaches have several shortcomings.
First, objective design risks over-stylization, which may lead to unintuitive and sensitive algorithms. 
Second, natural objectives are neglected if they do not admit a satisfactory bound for the artificial measure of competitive ratios.
For instance, the Nash Social Welfare\footnote{Another form of Nash Social Welfare with the same desirable properties is $f(\bell) = \prod_{i\in[n]} \bell(i)$.} objective $f(\bell) = (\prod_{i\in [n]}\bell(i))^{1/n}$, despite its desirable properties \cite{caragiannis2019unreasonable}, is avoided in worst-case allocation settings because it prohibits a constant competitive ratio without additional side information \cite{banerjee2022online}. 
We use the objective-free \mmo condition to analyze allocation algorithms.
This notion, developed by \cite{banerjee2026wf} for online allocation with divisible resources, uses the well-studied majorization preorder to compare the equity of various allocations.
Our elegant \mmo analysis implies sharp competitive ratio optimality guarantees for \bricklay under many equity-promoting objectives, suggesting that \bricklay is the most robust algorithm for worst-case online equitable allocation.

% have deterred the study of fundamental settings when they preclude satisfactory competitive ratio bounds \hl{not sure if this is a fair assessment}.
% Our analysis instead focuses on the objective-free \mmo property.
% This notion of optimality directly compares the equity of allocations using the \textit{majorization} preorder.
% To prove \bricklay exhibits the \mmo, we employ majorization and game theoretic analysis to show that deviation from \bricklay cannot lead to more equitable outcomes against an adaptive adversary.
% The \mmo condition is markedly stronger than optimal competitive ratio guarantees, as it implies optimal competitive ratios and regret bounds under any reasonable equity-promoting objective.
% We believe that our results provide a satisfactory notion of optimality in online allocation settings that previously yielded unsatisfactory competitive-ratio results.

\subsection{Setting}
An algorithm designer and an adversary compete in a two-player zero-sum game, defined over $n$ agents and $m$ resources.
The adversary and algorithm designer move in alternating fashion.
On round $t$ of the game, the adversary presents a \textbf{discrete polymatroid} $P_t \define \{\mbf{x}\in\Z_{\ge 0}^{n}\mid \mbf{x}(A) \le r_t(A), \forall A\in 2^{[n]}\}$, where $r_t:2^{[n]}\to\Z_{\ge 0}$ is the rank function defining the polymatroid and $\mbf{x}(A) \define\sum_{i\in A}\mbf{x}(A)$.
The rank function is normalized (i.e., $r_t(\emptyset) = 0$), monotone (i.e., $r_t(A)\subseteq r_t(B)$ for all $A\subseteq B$), and submodular (i.e., $r_t(A) + r_t(B) \ge r_t(A\cap B) + r_t(A\cup B)$ for all $A, B$).
In response to the adversary's polymatroid, the algorithm designer selects a basis of the polymatroid $\mbf{x}_t \in \base(P_t) \define \{\mbf{x}\in P_t\mid \mbf{x}([n]) = r_t([n])\}$ which represents an allocation of resources.
The rank function $r_t$ is specified via a value oracle that, given any $A\subseteq [n]$, returns $r_t(A)$.
The game continues until all resources are exhausted: the adversary must generate polymatroids satisfying $m = \sum_{t\in[h]} \rank(P_t)$, where $h$ is the total number of rounds and $\rank(P_t) \define r_t([n])$ is the rank of polymatroid $P_t$.
Equivalently, the adversary can instead present discrete convex games, from which the algorithm designer selects cores (see \Cref{app:sec:combo_constraint}).

Both players may adapt their strategy on round $t$ to the history of play from rounds $1,\dots,t-1$.
We use $\calP$ to denote an adversarial strategy that deterministically\footnote{All results extend in expectation under a randomized strategy $\calP$; we assume determinism for ease of exposition.} maps the history of play to a polymatroid to present to the algorithm designer.
Similarly, $\calA$ is a strategy available to the algorithm designer; it randomly maps a history of play and a presented polymatroid to a basis to select.
A strategy profile $(\calA, \calP)$ realizes a possibly randomized sequence of polymatroids $(P_t)_{t\in[h]}$ and selected bases $(\mbf{x}_t)_{t\in[h]}$.
The profile yields an allocation (or \emph{load vector}) $\bell(\calA, \calP) \define \sum_{t\in[h]} \mbf{x}_t$; the set of feasible hindsight load vectors is $H(\calA, \calP) \define \{\sum_{t\in[h]}\mbf{z}_t \mid \mbf{z}_t\in \base(P_t)\}$.
Finally, $\calP_{n,m}$ denotes the collection of adversarial strategies on $n$ agents and $m$ resources.

The goal of the algorithm designer is to produce an allocation $\ell(\calA, \calP)$ that maximizes the output of some given equity measure $f:\Z^{n}_{\ge 0}\to \R_{\ge 0}$ (or minimizes some inequity measure $g:\Z^{n}_{\ge 0}\to \R_{\ge 0}$).
We compare the equity of the algorithm designer's load vector to that of the most equitable load vector in hindsight, scaled by a comparison factor $\alpha \ge 0$.
The cost to the algorithm designer when she plays $\calA$ against an adversary playing $\calP$ under equity measure $f$ with comparison factor $\alpha$ is:
\begin{align*}
    c_{f,\alpha}(\calA, \calP) \define \E_\calA\left[\alpha \cdot \sup_{\bell^*\in H(\calA, \calP)} f(\bell^*) - f(\bell(\calA, \calP))\right]
\end{align*}
We adopt the perspective of an algorithm designer seeking to develop a robust strategy against every adversary.
In particular, we aim to minimize the algorithm designer's $\alpha$-regret, or her greatest cost incurred against any adversary: $\reg_{f, \alpha}^{n,m}(\calA) \define \sup_{\calP\in \calP_{n,m}} c_{f, \alpha}(\calA, \calP)$.
The cost of an algorithm designer minimizing inequity measure $g$ is $c_{g,\alpha}(\calA, \calP) \define \E_\calA\left[g(\bell(\calA, \calP)) - \alpha \cdot\inf_{\bell^*\in H(\calA, \calP)} g(\bell^*) \right]$ and her regret is $\reg_{g,\alpha}^{n,m}(\calA) \define \sup_{\calP\in\calP_{n,m}} c_{g,\alpha}(\calA, \calP)$.
We will always associate objectives $f$ and $g$ with maximization and minimization, respectively.
Note that getting minimax optimal regret for \emph{all} $\alpha$ implies a minimax optimal competitive ratio.
%Formally, in the case of equity objective $f$, allocation strategy $\calA$ minimizes the competitive ratio $\comprat_f(\calA) \define \sup_{\calP\in\calP_{n,m}} \frac{\E_\calA f(\bell(\calA, \calP))}{\E_\calA\left[\sup_{\bell^*\in H(\calA, \calP)} f(\bell^*)\right]}$ if $\calA$ minimizes $\reg_{f,\alpha}^{n,m}(\calA)$ for all $\alpha \ge 0$. Competitive ratios can be similarly defined for inequity measures $g$.

\subsection{Majorization Minimax-Optimality}\label{subsec:intro:mmo}
In this section, we introduce majorization minimax-optimality (\mmo) as a natural objective-free notion of optimality and illuminate how it implies optimal $\alpha$-regret for all comparison factors $\alpha$.
To show that deterministic allocation strategy $\mathcal{A}$ minimizes regret under equity measure $f$ and comparison factor $\alpha$, it suffices to show $\reg_{f,\alpha}^{n,m}(\calA) \le \reg_{f,\alpha}^{n,m}(\calA')$ for all alternate policies $\mathcal{A}'$. Since regret considers worst-case adversarial play, this is equivalent to the following: for all deviation policies $\calA'$ and adversarial strategies $\calP \in\calP_{n,m}$, there exists an adversarial strategy $\mathcal{P}'\in \calP_{n,m}$ such that $\calA'$ is no better against $\calP'$ than $\mathcal{A}$ against $\mathcal{P}$:
\begin{align*}
    \forall \mathcal{A}' \,\, \forall \mathcal{P} \,\, \exists \mathcal{P}' \quad c_{f,\alpha}(\mathcal{A}, \mathcal{P}) \le c_{f,\alpha}(\mathcal{A}', \mathcal{P}').
\end{align*}
This condition still depends on an objective $f$.
To obtain an objective-free criterion, we shift to pairwise comparison of load vectors. 
Let $\opt_f(\calA, \calP) \in \arg\max_{\bell^*\in H(\calA, \calP)} f(\bell^*)$ denote an arbitrary vector that is most equitable in hindsight. 
A sufficient condition for the inequality above is that $(\calA, \calP)$ simultaneously achieves weakly higher allocation equity and weakly lower hindsight equity than $(\calA', \calP')$, i.e.,  $f(\bell(\calA, \calP)) \ge \E f(\bell(\calA', \calP'))$ and $f(\opt_f(\calA, \calP))\le \E f(\opt_f(\calA', \calP'))$.
This reduces optimality to pairwise comparisons of load vector equity, as measured by $f$.
To remove dependence on $f$, we can ask: is there an objective-free notion of vector equity that recovers the above as a consequence for all reasonable equity-promoting objectives?

Majorization theory answers this question in the positive.
Majorization, a preorder relation on vectors, is founded on the Dalton-Pigou transfer or Robin principle that specifies that transferring resources from the wealthy to the poor yields more equitable outcomes \cite{arnold1987majorization,marshall2011majorization}.
\begin{definition}[Majorization]\label{def:intro:maj}
    For $\mbf{x},\mbf{y}\in\Z_{\ge 0}^n$ with $\mbf{x}([n]) = \mbf{y}([n])$, vector $\mbf{x}$ majorizes $\mbf{y}$ (written as $\mbf{x}\succeq \mbf{y}$) when $\mbf{x}^{\uparrow}([k]) \le \mbf{y}^{\uparrow}([k]), \forall k\in[n]$, where the $i^{\text{th}}$ element of $\mbf{x}^{\uparrow}\in\Z_{\ge 0}^n$ is the $i^{\text{th}}$ smallest element of $\mbf{x}$.
\end{definition}
We equivalently write $\mbf{y}\preceq \mbf{x}$ when $\mbf{x}$ majorizes $\mbf{y}$.
If vectors $\mbf{x}$ and $\mbf{y}$ are equivalent up to permutation (i.e.,$\mbf{x} \preceq \mbf{y}$ and $\mbf{x}\succeq \mbf{y}$), we write $\mbf{x}\sim\mbf{y}$.
Notably, vectors lower in the majorization order are more equitable.
The majorization order is aligned with Schur-monotone functions, which include most equity-promoting objectives.
\begin{definition}[Schur Monotone Functions]
    Function $f:\Z_{\ge 0}^n \to \R$ is Schur-concave when $\mbf{x} \preceq \mbf{y}$ implies $f(\mbf{x}) \ge f(\mbf{y})$.
    In contrast, $g:\Z_{\ge 0}^n \to \R$ is Schur-convex when $\mbf{x} \preceq \mbf{y}$ implies $g(\mbf{x}) \le g(\mbf{y})$.
\end{definition}
A vector $\mbf{x}$ deemed to be most equitable under majorization maximizes Schur-concave functions, such as the matching objective $\sum_{i\in[n]} \min(b, \mbf{x}(i))$ for any $b > 0$, Nash Social Welfare (NSW), and egalitarian welfare.
Such an $\mbf{x}$ simultaneously minimizes Schur-convex objectives, which include popular inequity measures such as $p$-norms and makespan.
See \Cref{app:sec:maj} for more details.

Equipped with majorization, we now present majorization minimax-optimality.
We alter the previous sufficient condition for optimal regret in two ways.
First, we apply \Cref{fact:back:hind_maj_min} below, which establishes the existence of a majorization-minimal hindsight load vector.
This implies that there is a randomized load vector $\bell^*$, depending only on the realization of the set $H(\calA, \calP)$, such that $\bell^*\in \arg\max_{\bell\in H(\calA, \calP)} f(\bell)$ for all Schur-concave $f$.
We call this vector $\opt(\calA, \calP) \define \bell^*$.
Second, we use majorization to perform pairwise comparisons: $f(\mbf{x}) \le f(\mbf{y})$ is replaced by $\mbf{x} \succeq \mbf{y}$.
We require majorization to hold with probability $1$ to preserve the consequence of optimal $\alpha$-regret.

\begin{tcolorbox}
\begin{definition}[Majorization Minimax-Optimality]
    For $n$ agents and $m$ resources, a deterministic allocation strategy $\calA$ is majorization minimax-optimal (\mmo) if for all alternative strategies $\calA'$ and adversarial strategies $\calP\in \calP_{n,m}$, there exists an adversarial strategy $\calP'\in\calP_{n,m}$ satisfying the following with probability $1$:
    \begin{align*}
        \bell(\calA, \calP) \preceq \bell(\calA', \calP')
        \quad\quad\text{and}\quad\quad
        \opt(\calA,\calP) \succeq \opt(\calA', \calP')
    \end{align*}
\end{definition}
\end{tcolorbox}

Like an equilibrium strategy in the online allocation game, an \mmo strategy implies that deviation from the strategy leaves the algorithm designer susceptible to worse game outcomes.
The notable difference is that \mmo uses majorization to compare outcomes, rather than regret (or competitive ratios).
Thus, an \mmo strategy is more robust than a regular equilibrium strategy because it implies minimax optimal regret under any reasonable equity-promoting objectives.

\subsection{Brick-Laying}
At this point, it is unclear whether any strategy satisfies the stringent \mmo property.
A myopic allocation strategy, which we call \bricklay, is \mmo for any number of agents and resources.
Each round, this algorithm selects a majorization-minimal basis (i.e., a basis that minimizes the sum of squared loads on agents).
We call the algorithm \bricklay, as it allocates resources to agents with lower loads in a manner that resembles laying bricks to build a wall.

\begin{definition}[Brick-Laying]\label{def:intro:bl}
    Let $\mbf{x}_t,\dots,\mbf{x}_{t-1}$ be a history of allocations, and $\bell_{t-1} = \sum_{s\in[t-1]} \mbf{x}_s$ denote the intermediate load vector. The \bricklay algorithm selects any $\mbf{x}_t\in \base(P_t)$ such that $\bell_{t-1} + \mbf{x}_t \preceq \bell_{t-1} + \mbf{x}$ for all $\mbf{x}\in\base(P_t)$, as computed in \Cref{alg:back:bl}.
\end{definition}

\Cref{alg:back:bl}, which implements \bricklay, is an adaptation of the greedy algorithm from \cite{federgruen1986greedy}.
We provide proof of its guarantees (\Cref{prop:into:bl}) in \Cref{app:sec:bl}.

\begin{algorithm}[ht]
    \SetAlgoNoLine
    \KwIn{
        Value oracle access to rank function $r_t$ defining polymatroid $P_t$ and load vector $\bell_{t-1}$.
    }
    \KwOut{
        A basis $\mbf{x}_t\in\base(P)$ yielding majorization minimal $\mbf{x}_t + \bell_{t-1}$.
    }
    Construct rank function $\wt{r}_t(\cdot) \gets r_t(\cdot) + \bell_{t-1}(\cdot)$ defining polymatroid $\wt{P}_t$\;
    Initialize $\mbf{x}^{(0)} \gets \vec{0}$ and $k\gets 0$ \;
    \While{$\mbf{x}^{(k)}\notin \base(\wt{P}_t)$}{
        Select $i_k\in\arg\min\{\mbf{x}^{(k)}(i)\mid \mbf{x}^{(k)} + \chi_i \in \wt{P}_t\}$ for $i^{\text{th}}$ standard basis $\chi_i$, breaking ties arbitrarily\;
        Update $\mbf{x}^{(k+1)} \gets \mbf{x}^{(k)} + \chi_{i_k}$ then update $k\gets k+1$\;
    }
    \Return basis $\mbf{x}_t \gets \mbf{x}^{(k)} - \bell_{t-1}$\;
    \caption{\bricklay}
    \label{alg:back:bl}
\end{algorithm}

\begin{restatable}[\bricklay Majorization Minimality]{proposition}{blmajmin}\label{prop:into:bl}
    Given $P_t$ and $\bell_{t-1}$ as input, \Cref{alg:back:bl} computes $\mbf{x}_t\in\base(P_t)$ yielding majorization minimal $\mbf{x}_t + \bell_{t-1}$ in polynomial time.
\end{restatable}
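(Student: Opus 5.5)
The plan has three parts. First, reduce the problem to finding a majorization-minimal basis of the single polymatroid $\wt{P}_t$. Second, prove that the greedy loop of \Cref{alg:back:bl} finds such a basis, using an invariant maintained by an exchange argument. Third, bound the running time.

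\textbf{Step 1 (reduction).} I would show that $\base(\wt{P}_t) = \bell_{t-1} + \base(P_t)$. The rank function $\wt{r}_t = r_t + \bell_{t-1}(\cdot)$ is the sum of $r_t$ and the modular rank function of the box polymatroid $B = \{\mbf{z}\in\Z^n_{\ge 0} \mid \mbf{z}\le \bell_{t-1}\}$. It is therefore normalized, monotone and submodular, and $\wt{P}_t = P_t + B$ by the integral Minkowski-sum theorem for polymatroids (integer decomposition property).
\begin{itemize}
\item For the inclusion $\supseteq$, every $\bell_{t-1}+\mbf{x}$ with $\mbf{x}\in\base(P_t)$ satisfies all constraints of $\wt{P}_t$ and has total $\wt{r}_t([n])$.
\item For the inclusion $\subseteq$, write a basis $\mbf{y}$ of $\wt{P}_t$ as $\mbf{x}+\mbf{z}$ with $\mbf{x}\in P_t$ and $\mbf{z}\in B$. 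Since $\mbf{x}([n])\le r_t([n])$, $\mbf{z}([n])\le\bell_{t-1}([n])$ and the total is tight, we get $\mbf{z}=\bell_{t-1}$ and $\mbf{x}\in\base(P_t)$.
\end{itemize}
Hence it suffices to show that the loop returns a majorization-minimal basis $\mbf{y}$ of $\wt{P}_t$, and then output $\mbf{y}-\bell_{t-1}$. A majorization-minimal basis exists by \Cref{fact:back:hind_maj_min}, applied to the one-round set $\base(\wt{P}_t)$, or by the classical least-majorized-base results.

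\textbf{Step 2 (greedy invariant).} I would prove by induction the invariant that $\mbf{x}^{(k)}\in\wt{P}_t$ and $\mbf{x}^{(k)}\le\mbf{y}^*$ coordinatewise for some majorization-minimal basis $\mbf{y}^*$. At termination $\mbf{x}^{(k)}$ is a basis dominated by the basis $\mbf{y}^*$, so the two are equal, which proves the claim.

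For the inductive step, let $i=i_k$. If $\mbf{y}^*(i)>\mbf{x}^{(k)}(i)$, the invariant persists with the same $\mbf{y}^*$. Otherwise $\mbf{y}^*(i)=\mbf{x}^{(k)}(i)$, and I would proceed as follows.
\begin{itemize}
\item Let $D$ be the minimal $\mbf{y}^*$-tight set containing $i$. By standard polymatroid exchange, $\mbf{y}^*+\chi_i-\chi_j\in\base(\wt{P}_t)$ for every $j\in D$.
\item Some $j\in D$ has $\mbf{y}^*(j)>\mbf{x}^{(k)}(j)$. Otherwise $\mbf{x}^{(k)}(D)=\wt{r}_t(D)$, contradicting $\mbf{x}^{(k)}+\chi_i\in\wt{P}_t$.
\item For this $j$, we have $\mbf{x}^{(k)}+\chi_j\le\mbf{y}^*$, so $\mbf{x}^{(k)}+\chi_j$ is feasible. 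The argmin rule then gives $\mbf{x}^{(k)}(i)\le\mbf{x}^{(k)}(j)$, and hence $\mbf{y}^*(j)\ge\mbf{x}^{(k)}(j)+1\ge\mbf{y}^*(i)+1$.
\item Consequently $\mbf{y}' = \mbf{y}^*+\chi_i-\chi_j$ is obtained by a Robin Hood transfer, so $\mbf{y}'\preceq\mbf{y}^*$, and $\mbf{y}'$ is still majorization-minimal.
\item Finally $\mbf{y}'\ge\mbf{x}^{(k+1)}$, because $j$ had slack and $i$ gained one unit.
\end{itemize}
I expect this exchange step to be the main obstacle, specifically certifying that the swap partner $j$ can be chosen with both slack and a larger load. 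The tight-set argument above is what I would try first.

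\textbf{Step 3 (running time).} Each iteration makes at most $n$ feasibility tests $\mbf{x}+\chi_i\in\wt{P}_t$. Each test is a submodular minimization of $\wt{r}_t(A)-\mbf{x}(A)-\1{i\in A}$, which is polynomial with value-oracle access to $r_t$. The naive loop, however, runs $\wt{r}_t([n])$ times, which is only pseudo-polynomial.

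To get a polynomial bound I would batch the steps. Starting from $\mbf{x}^{(0)}=\vec{0}$ is wasteful, so I would first raise all coordinates together, computing via submodular minimization the largest feasible common level $\lambda$, i.e.\ the largest $\lambda$ with $\min_A(\wt{r}_t(A)-\lambda|A|)\ge 0$. I would then raise the current minimal feasible coordinates by their maximal joint saturation amount, or alternatively use the decomposition algorithm of \cite{federgruen1986greedy}, which needs polynomially many submodular minimizations. Ties are handled arbitrarily, consistent with the invariant.
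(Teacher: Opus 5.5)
Your argument is correct, but its key step takes a different route from the paper's.

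\textbf{Reduction.} For the inclusion $\base(\wt{P}_t)\subseteq\bell_{t-1}+\base(P_t)$, the paper avoids the integral polymatroid sum theorem and uses a short contradiction. If a basis $\mbf{y}$ of $\wt{P}_t$ had $\mbf{y}(i)<\bell_{t-1}(i)$, then $\mbf{y}([n])<\wt{r}_t([n]\setminus\{i\})+\bell_{t-1}(i)\le\wt{r}_t([n])$ by monotonicity of $r_t$. So every basis dominates $\bell_{t-1}$, and $\mbf{y}-\bell_{t-1}\in\base(P_t)$ follows directly from the constraints. Your explicit reverse inclusion, which the paper uses tacitly, is a welcome addition.

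\textbf{Minimality.} The paper does not argue from scratch. It cites the Federgruen--Groenevelt greedy for separable concave maximization over polymatroid bases. It then asserts that the min-load rule is a legitimate greedy run for every threshold objective $\sum_i\min(\gamma,\mbf{x}(i))$ at once, and concludes via the threshold characterization of majorization. That route is short, but it leans on the cited theorem holding under arbitrary tie-breaking for merely (not strictly) concave objectives. Your tight-set exchange invariant replaces this black box with a self-contained proof, and it checks out:
\begin{itemize}
\item the chosen $j$ has $\mbf{y}^*(j)>\mbf{x}^{(k)}(j)\ge 0$, so $\mbf{y}^*(j)-1\ge 0$;
\item the swap $\mbf{y}^*+\chi_i-\chi_j$ is feasible because every tight set containing $i$ contains $D\ni j$;
\item the transfer moves a unit from a coordinate at least one larger, so it is a Robin Hood transfer.
\end{itemize}

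\textbf{Existence.} Do not take the existence of a least-majorized basis from \Cref{fact:back:hind_maj_min}: the paper derives that fact from this very proposition, so the citation is circular. You can drop the existence input entirely. Fix an arbitrary basis $\mbf{z}$ and maintain the invariant ``$\mbf{x}^{(k)}\le\mbf{y}$ for some basis $\mbf{y}\preceq\mbf{z}$,'' starting from $\mbf{y}=\mbf{z}$. Your exchange step never uses minimality of $\mbf{y}^*$, only $\mbf{y}'\preceq\mbf{y}^*$. At termination the output is therefore $\preceq\mbf{z}$ for every basis $\mbf{z}$, which gives existence and minimality at once.

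\textbf{Running time.} The loop performs exactly $\wt{r}_t([n])=r_t([n])+\bell_{t-1}([n])\le m$ increments. So \Cref{alg:back:bl} as written is polynomial in $n$ and $m$, which is the sense the paper relies on; it only bounds the cost of a single iteration. Your batching scheme is a different algorithm from the one in the statement, and it is only sketched. For instance, a tied set can saturate partway through a common level, leaving some of its coordinates one unit higher than others. It is not needed for the statement. It also could not give a bit-complexity bound for \Cref{alg:back:bl} itself, which genuinely performs $\wt{r}_t([n])$ unit steps.
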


The special case in which $P_t$ constrains the allocation of a single resource ($\rank(P_t) = 1$) is central to our analysis.
In such cases, \Cref{alg:back:bl} has a simple implementation; it returns a basis $\mbf{x}_t \gets \chi_i$, where $i$ is an agent with lowest load that can receive the $t^{\text{th}}$ resource: $i\in\arg\min\{\bell_{t-1}(i) \mid \chi_i\in\base(P_t)\}$.
We use this implementation throughout our analysis.

\Cref{prop:into:bl} establishes that polymatroids admit a majorization minimal basis.
Since polymatroids are closed under Minkowski sums \cite{schrijver2002combinatorial,fujishige2005submodular}, this implies the existence of a majorization minimal hindsight allocation (proof in \Cref{app:sec:bl}).

\begin{restatable}[Majorization Minimal Hindsight Allocation]{fact}{hindalloc}\label{fact:back:hind_maj_min}
    All realizations of the polymatroid sequence admitted by profile $(\calA, \calP)$ admit a majorization minimal hindsight allocation.
    We denote this vector $\opt(\calA, \calP)$, which is randomized according to the realized polymatroid sequence.    
\end{restatable}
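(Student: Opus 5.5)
The plan is to reduce \Cref{fact:back:hind_maj_min} to \Cref{prop:into:bl}, using the closure of discrete polymatroids under Minkowski sums. Fix a realization of the polymatroid sequence $(P_t)_{t\in[h]}$ produced by the profile $(\calA,\calP)$. The hindsight set $H(\calA,\calP)=\{\sum_t \mbf{z}_t \mid \mbf{z}_t\in\base(P_t)\}$ depends only on this realization, not on the bases the algorithm actually chose.

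First I will show that $H(\calA,\calP)=\base(P)$ for a single discrete polymatroid $P$. Define $r\define\sum_{t\in[h]} r_t$. This function is normalized, monotone and submodular, since each of these properties is preserved under sums. So it defines a discrete polymatroid $P$ with $\rank(P)=\sum_t r_t([n])=m$.

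The inclusion $H\subseteq\base(P)$ is immediate. For any $A\subseteq[n]$,
\[
\sum_t \mbf{z}_t(A)\le \sum_t r_t(A)=r(A),
\]
and equality holds at $A=[n]$.

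The reverse inclusion is the integral polymatroid sum theorem: every integer point of the polymatroid defined by $r_1+\dots+r_h$ decomposes as $\sum_t \mbf{y}_t$ with $\mbf{y}_t\in P_t$ integral \cite{schrijver2002combinatorial,fujishige2005submodular}. If the point is a basis of $P$, then $\sum_t \mbf{y}_t([n])=\sum_t r_t([n])$ while each $\mbf{y}_t([n])\le r_t([n])$. Hence every $\mbf{y}_t$ is a basis of $P_t$. I expect this step to be the main obstacle. The sum of polymatroids is classical, but its integer decomposition property (discrete Minkowski sum closure) is the one substantive ingredient, and I will cite it rather than reprove it. If a self-contained argument is wanted, one can induct on $h$ and use the polymatroid intersection theorem for the two-summand case.

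Next I apply \Cref{prop:into:bl} with input polymatroid $P$ and $\bell_{t-1}=\vec{0}$. This yields a basis $\bell^*\in\base(P)$ with $\bell^*\preceq\mbf{x}$ for all $\mbf{x}\in\base(P)=H(\calA,\calP)$. That is exactly a majorization-minimal hindsight allocation. Note that majorization is only a preorder, so the minimizer is unique only up to $\sim$; the proposition nevertheless gives a single vector below all others. I set $\opt(\calA,\calP)\define\bell^*$, choosing it by a fixed tie-breaking rule as a function of the realized sequence. It is therefore randomized only through $(P_t)_t$.

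Finally, I will justify the consequence claimed in the text. For any Schur-concave $f$ and any $\bell\in H$, we have $\bell^*\preceq\bell$, so $f(\bell^*)\ge f(\bell)$. Thus $\bell^*$ simultaneously maximizes every Schur-concave objective and minimizes every Schur-convex one, for every realization.
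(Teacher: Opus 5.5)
Your proposal is correct and takes the same route as the paper. Both arguments form the polymatroid with rank function $\sum_{t} r_t$, identify the hindsight set $H(\calA,\calP)$ with its set of bases, and apply the \bricklay majorization-minimality guarantee with zero initial load. You go slightly further by proving both inclusions of $H(\calA,\calP)=\base(P)$ and by naming the integral decomposition property of polymatroid sums as the one nontrivial ingredient; the paper simply asserts ``$H=\base(Q)$'' and relies on Minkowski-sum closure.
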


\subsection{Main Result}

We can now formally state our main results, establishing that \bricklay is \mmo.

\begin{tcolorbox}
\begin{restatable}[\bricklay Optimality]{theorem}{blmmo}\label{thm:intro:mmo}
    The \bricklay algorithm is majorization minimax-optimal under any number of agents $n$ and resources $m$.
\end{restatable}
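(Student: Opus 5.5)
Given any deviation strategy $\calA'$ and any adversary $\calP\in\calP_{n,m}$, I will construct an adversary $\calP'\in\calP_{n,m}$ that forces $\calA'$ into a load vector majorizing $\bell(\calA,\calP)$. Simultaneously, $\calP'$ must keep the hindsight optimum no worse than $\opt(\calA,\calP)$ in the majorization order. The construction has three stages, and the final one is where most of the work lies.

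\textbf{Stage 1: reduce to unit-rank polymatroids.} Run \bricklay\ against $\calP$ and split each basis $\mbf{x}_t$ into unit resources. Each unit is replaced by a rank-one polymatroid whose feasible set is a set of agents. The aim is for the resulting single-resource instance to reproduce the same \bricklay\ final load and a hindsight optimum that is no better. For this I would use the exchange structure of polymatroid bases together with the greedy order of \Cref{alg:back:bl}: each increment $\chi_{i_k}$ is chosen among the lowest-loaded feasible agents, so the resource can be offered to exactly that tight set. Since \bricklay\ then behaves identically and the hindsight set only grows or stays comparable, the reduction preserves both sides of the \mmo\ inequalities.

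\textbf{Stage 2: normalize \bricklay's outcome.} With unit resources, describe the outcome $\bell=\bell(\calA,\calP)$ through its sorted profile, viewed as an integer partition $\lambda$ of $m$. I would then show that $\opt(\calA,\calP)$ is majorized-bounded by a canonical \emph{worst-case} instance depending only on $\lambda$. In that instance the adversary lays bricks layer by layer. At height $j$, it offers resources only to the agents that end with load above $j$; the number of such agents is the conjugate part $\lambda'_j$. In hindsight those resources could have gone to the agents that \bricklay\ leaves low. The conjugate partition $\lambda'$ therefore encodes the hindsight optimum, and $\opt(\calA,\calP)\succeq$ the canonical hindsight vector. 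This step uses majorization duality: $\lambda\preceq\mu$ if and only if $\lambda'\succeq\mu'$.

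\textbf{Stage 3: build $\calP'$ against $\calA'$.} Against the deviating $\calA'$, the adversary $\calP'$ plays adaptively. It presents the same family of nested feasible sets but always restricts the next resource to the agents that $\calA'$ has currently loaded most, mimicking the brick-laying wall. The load vector of $\calA'$ must then majorize $\lambda$: at each stage, the top-$k$ sums of $\calA'$'s loads dominate those of the canonical wall. The hindsight optimum of $\calP'$ stays at most the canonical one via the conjugate structure.

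The main obstacle is this last step. The adversary must commit to feasible sets before knowing $\calA'$'s future choices, while the choices of $\calA'$ may be randomized. The pathwise (probability one) majorization for $\bell(\calA',\calP')$ and the reverse majorization for $\opt$ must therefore be certified simultaneously, and may pull in opposite directions. I would first attempt an exchange lemma: any deviation from placing a brick on a lowest feasible agent can be swapped back without worsening the top-$k$ sums. Both inequalities would then follow by induction on $m$, using the conjugate partition as the potential.
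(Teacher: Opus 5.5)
Your overall plan matches the paper's. First, replace an arbitrary adversary by a canonical nested instance built from the conjugate of \bricklay's final load. Then, against $\calA'$, play the adaptive nested response that offers each unit only to the $z_t$ currently most-loaded agents, and chain the comparisons. Your Stage 3 construction is the right one, but you have misplaced the difficulty: that is the light step. Realization by realization, nested feasible sets of identical sizes give hindsight sets that agree up to relabeling, so $\opt(\bl,\wh{\calP})\sim\opt(\calA',\calP')$ with no exchange argument. The load comparison $\bell(\bl,\wh{\calP})\preceq\bell(\calA',\calP')$ follows from a direct induction on bottom partial sums.

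The genuine gap is in Stages 1--2. Nothing in your argument shows that $\opt(\bl,\calP)$ survives, up to relabeling, in the canonical nested instance, and that is the heart of the proof.

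\emph{Stage 1 goes the wrong way.} Offering a unit only to its tight set of lowest-loaded feasible agents deletes exactly the options the hindsight optimum exploits. Take $n=2$ and unit rounds offering $\{1\},\{1\},\{1,2\},\{2\},\{2\},\{2\}$. \bricklay ends at $(2,4)$, while hindsight reaches $(3,3)$ by sending round 3 to agent 1. Restricting that round to its tight set $\{2\}$ forces hindsight $(2,4)$, so the reduced adversary is weaker, not stronger; units inside higher-rank rounds behave the same way.

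\emph{Stage 2's justification does not apply.} The duality $\lambda\preceq\mu\iff\lambda'\succeq\mu'$ cannot give $\opt(\bl,\calP)\succeq$ the canonical optimum, since $\opt(\bl,\calP)$ depends on the polymatroids and not only on $\lambda$.

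\emph{What is missing} is a per-round brick matching. For each $t$ you need a bijection from the hindsight bricks $M^{\bell_{t-1}}_{\mbf{x}^*_t}$ to the \bricklay bricks $M^{\bell_{t-1}}_{\mbf{x}_t}$ such that a hindsight brick on agent $i^*$ matched to a level-$l$ brick has $\bell_{t-1}(i^*)+\mbf{x}_t(i^*)\ge l-1$. Since loads only grow, $i^*$ ends among the $\lambda'_{l-1}$ agents of final load at least $l-1$. After relabeling agents by final load, the brick can therefore be placed in epoch $l$. This forces epoch $l$ to offer its $\lambda'_l$ units to those $\lambda'_{l-1}$ agents, not just to the $\lambda'_l$ agents of final load at least $l$. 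Rounds $\{1,2\},\{1\}$ already break the stricter version: \bricklay gives $(2,0)$ and hindsight $(1,1)$, but the stricter instance forces $(2,0)$.

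For rank one the matching property is immediate. In general the paper derives it from Hall's theorem. A Hall violator gives a level $k$ with $\mbf{x}^*_t(L_{k-1})<\mbf{x}_t(L_k)$, where $L_k=\{i:\bell_{t-1}(i)+\mbf{x}_t(i)\ge k\}$. An exchange lemma for polymatroid bases, proved via the strong basis-exchange bijection in an auxiliary matroid, then yields a basis $\mbf{x}_t-\chi_i+\chi_{i^*}$ with $i\in L_k$ and $i^*\notin L_{k-1}$. Moving that unit lowers the threshold sum $\sum_{i'}(\bell_{t-1}(i')+\mbf{x}_t(i')-(k-1))^+$ by one. This contradicts the majorization-minimality of \bricklay's choice. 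Your phrase ``the exchange structure of polymatroid bases'' points in this direction, but this specific exchange statement and the Hall-type counting are the missing ideas.
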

\end{tcolorbox}

As an immediate consequence of the \mmo property, we get that \bricklay simultaneously achieves optimal regret minimization under any Schur-monotone function, and any comparison factor $\alpha$ (and hence minimax optimal competitive ratios). 
%reasonable equity-promoting objective and comparison factor.

\begin{restatable}[\bricklay Regret]{corollary}{bloptregret}\label{cor:intro:bl_opt_regret}
    Fix parameters $n,m\in\Z_{\ge 0}^n$, and $\alpha > 0$.
    For any Schur-concave objective $f:\Z_{\ge 0}^{n} \to \R_{\ge 0}$ or Schur-convex objective $g: \Z_{\ge 0}^{n} \to \R_{\ge 0}$, the \bricklay strategy on $n$, $m$ yields lower $\alpha$-regret then any allocation strategy $\calA'$:
    \begin{align*}
        \reg_{f,\alpha}^{n,m}(\bl) \le \reg_{f, \alpha}^{n,m}(\calA')
        \quad\quad\text{or}\quad\quad
        \reg_{g,\alpha}^{n,m}(\bl) \le \reg_{g, \alpha}^{n,m}(\calA')
    \end{align*}
\end{restatable}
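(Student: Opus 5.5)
The corollary follows from \Cref{thm:intro:mmo} together with the reduction sketched in \Cref{subsec:intro:mmo}. I will treat the Schur-concave case $f$; the Schur-convex case $g$ is symmetric, with every inequality reversed. Since $\reg_{f,\alpha}^{n,m}(\calA') = \sup_{\calP'} c_{f,\alpha}(\calA',\calP')$, it suffices to show the following. For every alternative strategy $\calA'$ and every adversary $\calP\in\calP_{n,m}$, there is an adversary $\calP'\in\calP_{n,m}$ with $c_{f,\alpha}(\bl,\calP)\le c_{f,\alpha}(\calA',\calP')$. Taking the supremum over $\calP$ on the left then gives $\reg_{f,\alpha}^{n,m}(\bl)\le \reg_{f,\alpha}^{n,m}(\calA')$.

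Fix $\calA'$ and $\calP$, and let $\calP'$ be the adversary supplied by \Cref{thm:intro:mmo}. Almost surely we then have $\bell(\bl,\calP)\preceq \bell(\calA',\calP')$ and $\opt(\bl,\calP)\succeq\opt(\calA',\calP')$. The rest of the argument converts these two majorization relations into the cost inequality.

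\begin{enumerate}
\item \textbf{Algorithm term.} By Schur-concavity, $f(\bell(\bl,\calP))\ge f(\bell(\calA',\calP'))$ almost surely. Since \bricklay is deterministic and $\calP$ is deterministic, the left side is a constant, so taking expectations over $\calA'$ gives $f(\bell(\bl,\calP))\ge \E f(\bell(\calA',\calP'))$.

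\item \textbf{Hindsight term.} By \Cref{fact:back:hind_maj_min}, $\opt(\calA,\calP)$ is majorization minimal in $H(\calA,\calP)$. Hence for every $\bell^*\in H$ we have $\opt\preceq\bell^*$, and Schur-concavity gives $\sup_{\bell^*\in H} f(\bell^*) = f(\opt(\calA,\calP))$. Schur-concavity applied to $\opt(\bl,\calP)\succeq\opt(\calA',\calP')$ then yields
\[
\sup_{H(\bl,\calP)} f = f(\opt(\bl,\calP)) \le f(\opt(\calA',\calP')) = \sup_{H(\calA',\calP')} f
\]
almost surely.

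\item \textbf{Combine.} Multiply the hindsight inequality by $\alpha\ge 0$, subtract the algorithm inequality, and take expectations. This gives $c_{f,\alpha}(\bl,\calP)\le c_{f,\alpha}(\calA',\calP')$.
\end{enumerate}

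For a Schur-convex $g$, the same two majorization relations give $g(\bell(\bl,\calP))\le \E g(\bell(\calA',\calP'))$ and $\inf_H g = g(\opt)$, with $g(\opt(\bl,\calP))\ge g(\opt(\calA',\calP'))$. Substituting these into $c_{g,\alpha}$ gives the analogous cost inequality.

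The only delicate point is step~2: identifying $\sup_{\bell^*\in H} f(\bell^*)$ with $f(\opt)$. This needs majorization minimality to hold against every element of $H$, not merely against the elements that maximize $f$, which is exactly what \Cref{fact:back:hind_maj_min} provides. Measurability and expectation issues are mild because all the relations hold with probability $1$, so the expectation step introduces no difficulty.
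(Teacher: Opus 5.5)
Your proposal is correct and follows essentially the same argument as the paper. You obtain $\calP'$ from \Cref{thm:intro:mmo}, apply Schur-concavity (resp.\ Schur-convexity) to both the algorithm's load and the hindsight term, combine using $\alpha \ge 0$, and take the supremum over $\calP$; you are also slightly more careful than the paper in justifying $\sup_{\bell^*\in H} f(\bell^*) = f(\opt)$ via majorization minimality and in handling the expectation over the randomness of $\calA'$.
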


\noindent{\bf Discussion of Results: } Before proceeding, we highlight three aspects of this result. 
\begin{itemize}[nosep]
\item By establishing \bricklay is \mmo compared to \emph{any} algorithm facing an adaptive adversary,~\Cref{thm:intro:mmo} also implies that \bricklay is the optimal \emph{deterministic} policy against an \emph{oblivious} adversary -- one who designs the entire polymatroid sequence with knowledge of the algorithm.
%Against such adversaries, our results hold when comparing \bricklay to deterministic alternative strategies $\calA'$; 
It does not, however, apply when comparing to randomized strategies for oblivious adversaries, such as the \textsc{Ranking} algorithm for online matching~\cite{karp1990optimal}.

\item While allowing complex polymatroid constraints allows us to capture a wide variety of settings, interestingly, it does not give the adversary additional power:
in~\Cref{lem:mmo:nested_worst}, we show that the best response to \bricklay is an online semi-matching instance.
In this sense, there is no loss in considering polymatroids over more commonly studied semi-matching constraints.

\item Since polymatroids are closed under Minkowski sums, and \bricklay yields the hindsight optimal solution if given the full instance (\Cref{fact:back:hind_maj_min}), one can view~\Cref{thm:intro:mmo} as an online \emph{local-to-global} property: it shows that myopically using \bricklay on local views of the polymatroid is optimal in a minimax sense. 
\end{itemize}

\subsection{Related Work}\label{subsec:intro:related_work}
Majorization is a cornerstone for analyzing equity in many domains \cite{arnold1987majorization, hardy1988inequalities, marshall2011majorization}, including semi-matchings \cite{frank2022decreasing1, harvey2006semi}, network flows \cite{megiddo1974optimal}, load balancing \cite{bhargava2001using, kleinberg1999fairness}, and persuasion \cite{banerjee2024fair}.
It is linked to integer partition conjugates, which characterize feasible degree distributions in bipartite graphs and directed graphs via the Gale-Ryser and Fulkerson Theorems \cite{fulkerson1960zero,gale1957theorem, ryser1957combinatorial}.
Further, several equivalent characterizations exist for majorization-minimal polymatroid bases \cite{frank2018discrete, frank2022decreasing1}, which can be computed greedily \cite{federgruen1986greedy}.

\cite{karp1990optimal} pioneered online matching, with subsequent work simplifying its analysis \cite{birnbaum2008line,devanur2013randomized,eden2021economics} and extending matching to broader allocation problems \cite{kalyanasundaram2000optimal, mehta2007adwords, mehta2010online}.
Recent research has explored fractional variants \cite{buchbinder2009design, devanur2012online, hathcock2024online,patton2026online} and the use of predictions to optimize objectives, which otherwise admit poor competitive ratios \cite{banerjee2022online, chakrabarty2019approximation,kesselheim2023online}.
Most existing online allocation analyses rely on primal-dual techniques and differential equations.
However, as noted by \cite{devanur2012online}, these methods often insinuate a mysterious ``kind of duality between the algorithm and the counterexample''.
We suggest that in integral equitability settings, this duality is captured by the conjugate of integral partitions.

The applications of majorization to online equitable allocation remain sparse.
\cite{goel2005approximate} proves that a special case of \bricklay yields allocations that approximately minorize the optimal hindsight solution.
Most closely related to our work is \cite{banerjee2026wf}, which proves the \textsc{Water-Filling} algorithm is \mmo for online fractional semi-matching.
The techniques in~\cite{banerjee2026wf} critically depend on the uniqueness and functional form of the \textsc{Water-Filling} allocation, and it is unclear if the insights carry over to integer allocations.
To circumvent this, we establish novel structural connections between online allocation and integer partition conjugates, which then allow us to extend \mmo to more complex combinatorial constraints.

\section{Proof Overview and an Illustrative Example}\label{sec:tech_overview}
%The main technical contribution of this work is proving that \bricklay is an \mmo strategy (\Cref{thm:intro:mmo}).
Since \mmo resembles an equilibrium condition, our analysis parallels equilibrium analysis in two-player zero-sum games.
%We adopt the adversary's perspective and develop strategies that yield poor outcomes for the algorithm designer.
In particular, the crux of our proof is showing that the adversary's best response to the algorithm designer using \bricklay is to employ a \textit{nested strategy}.
\begin{definition}[Nested Strategy]
    An adversarial strategy $\calP\in \calP_{n,m}$ is nested if, for any strategy $\calA$, the profile $(\calA, \calP)$ induces a polymatroid sequence $P_1,\dots,P_m$ whose rank functions $r_1,\dots,r_m$ take the form $r_t(A) = \1{A\cap N_t\neq \emptyset}$ for $N_1\supseteq \dots\supseteq N_m$ with probability $1$. 
\end{definition}
Nested strategies pose a challenge because the algorithm designer would prefer to assign resources to agents who are absent from future neighborhoods, but she cannot identify such agents from past information.
Notably, nested strategies are semi-matching instances in which online nodes arrive one at a time: online node $t$ has neighborhood $N_t$.
% \Cref{lem:mmo:nested_worst} shows that polymatroids grant no additional power to the adversary beyond the special case of semi-matching.

In \Cref{lem:mmo:nested_worst}, we demonstrate that nested strategies are a best response to \bricklay.
We do so by treating the load vector produced by \bricklay as an integer partition of $m$ resources into $n$ parts (agents) and leveraging its conjugate.
\begin{definition}[Conjugate Load Vector]
    The conjugate load vector of $\bell\in \Z_{\ge 0}^{n}$ with $\bell([n]) = m$, denoted as $\bell^{\circ}\in\Z_{\ge 0}^{m}$, has $\bell^{\circ}(j) = |\{i\in [n]\mid \mbf{x}(i) \ge j\}|$ for all $j\in [m]$.
\end{definition}
Informally, the conjugate of a load vector -- that behaves like a discrete dual to majorization -- encodes the number of agents receiving at least $j$ resources for each $j$. In~\Cref{subsec:overview:illustration}, we illustrate via a simple example how we use the conjugate to establish that nested instances are a best response to \bricklay. The formal result for general polymatroids is presented in~\Cref{sec:mmo}.

The second part of our proof, \Cref{lem:mmo:vs_alt}, compares an alternative algorithm $\calA'$ to \bricklay.
Using~\Cref{lem:mmo:nested_worst}, we can focus only on nested strategies $\calP$. Our goal now is to construct $\calP'$ on which $(\bl, \calP)$ provides weakly better outcomes than $(\calA', \calP')$.
To construct such a $\calP'$, we relabel offline nodes so that agents with low load under strategy $\calA'$ never have another opportunity to be allocated a resource.
We note that a similar proof is performed in~\cite{banerjee2026wf} for fractional allocation problems.
Composing \Cref{lem:mmo:nested_worst} and \Cref{lem:mmo:vs_alt} shows that \bricklay is \mmo.

\subsection{Nestedness is Worst-Case: a Semi-Matching Example}\label{subsec:overview:illustration}
We present an online semi-matching example to give intuition for our most technical contribution: \Cref{lem:mmo:nested_worst}.
To show nested strategies are the adversary's best response to \bricklay, we analyze \Cref{alg:mmo:nest}, which takes as input an adversarial sequence $\calP$ and produces a nested strategy $\wh{\calP}$ on which \bricklay yields weakly worse outcomes.
To simplify our analysis, we assume \bricklay tie-breaks in favor of agents with a lower index when presented with a rank $1$ polymatroid.
This is without loss of generality because different tie-breaking yields the same intermediate load vector up to permutation; the adversary can adjust to these different permutations using their adaptive power.
Again, the $i^{\text{th}}$ standard basis is $\chi_i$ with $\chi_i(j) = \1{i = j}$.

\begin{algorithm}[ht]
    \SetAlgoNoLine
    \KwIn{
        Adversarial strategy $\calP\in\calP_{n,m}$.
    }
    \KwOut{
        $\wh{\calP}\in\calP_{n,m}$ nested under $\bl$ with $\bell(\bl, \calP) \sim \bell(\bl, \wh{\calP})$ and $\opt(\bl, \calP) \succeq \opt(\bl, \wh{\calP})$.
    }
    Define the empty sequence $S$\;
    Compute $\bell \gets \bell(\bl, \calP)$ and take $\bell^{\circ}(0) \gets n$\;
    \For{each epoch $j\in [m]$}{
        Append $\bell^{\circ}(j)$ copies of $\wh{P}$ with rank function $\wh{r}(A) = \1{A \cap [\bell^{\circ}(j-1)] \neq\emptyset}$ to the end of sequence $S$\;
    }
    \Return the policy $\wh{\calP}$ which deterministically plays polymatroids in $S$ regardless of history\;
    \caption{Nested is Worst-Case}
    \label{alg:mmo:nest}
\end{algorithm}
We note that each epoch $j$ can instead consist of a single polymatroid $P$ with rank function $r(A) = \bell^{\circ}(j) \cdot \1{A\cap [\bell^{\circ}(j-1)] \neq\emptyset}$; all our analysis holds with straightforward adjustments.

\begin{restatable}[Nested is Worst-Case]{lemma}{nested}\label{lem:mmo:nested_worst}
    Given $\calP\in\calP_{n,m}$, \Cref{alg:mmo:nest} yields strategy $\wh{\calP}\in \calP_{n,m}$, which is nested and satisfies: $\bell(\bl, \calP) \sim \bell(\bl, \wh{\calP})$ and $\opt(\bl, \calP) \succeq \opt(\bl, \wh{\calP})$.
\end{restatable}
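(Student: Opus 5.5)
The proof has three parts: $\wh{\calP}$ is a valid nested strategy, \bricklay's load vector is preserved up to permutation, and the hindsight optimum can only become more equitable. Write $\bell \define \bell(\bl,\calP)$.

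\textbf{Validity and nestedness.} Since $\bell^{\circ}$ is nonincreasing in $j$ and $\bell^{\circ}(0)=n$, the neighborhoods $[\bell^{\circ}(j-1)]$ used in successive epochs are nested. The total number of rank-$1$ polymatroids is $\sum_{j}\bell^{\circ}(j)=\bell([n])=m$. Hence $\wh{\calP}\in\calP_{n,m}$ is nested, and it is history-independent by construction.

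\textbf{Load preservation.} I would prove by induction on $j$ the following invariant for \bricklay on $\wh{\calP}$, using the rank-$1$ implementation with lowest-index tie-breaking. At the start of epoch $j$:
\begin{itemize}
\item every agent in $[\bell^{\circ}(j-1)]$ has load exactly $j-1$;
\item every agent $i>\bell^{\circ}(j-1)$ has its final load, namely $\max\{j' : \bell^{\circ}(j')\ge i\}$.
\end{itemize}
During epoch $j$, each of the $\bell^{\circ}(j)$ resources may go only to agents in $[\bell^{\circ}(j-1)]$, all of which start at load $j-1$. Minimal load with lowest-index ties therefore assigns one unit each to agents $1,\dots,\bell^{\circ}(j)$, in that order. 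After the epoch those agents have load $j$ and the rest keep load $j-1$ forever, which restores the invariant. Consequently agent $i$ ends with load $\max\{j:\bell^{\circ}(j)\ge i\}$, the $i$-th largest entry of $\bell$. So $\bell(\bl,\wh{\calP})$ is $\bell$ sorted in decreasing order, and $\bell(\bl,\wh{\calP})\sim\bell$. Other tie-breaking rules permute agents, and an adaptive adversary can relabel accordingly.

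\textbf{Hindsight comparison.} This is the main obstacle. I would use the threshold characterization of integer majorization: for vectors of equal sum, $\mbf{x}\preceq\mbf{y}$ iff $\sum_i(\mbf{x}(i)-c)^+\le\sum_i(\mbf{y}(i)-c)^+$ for every integer $c\ge 0$. Since $\opt$ is majorization-minimal in its hindsight set (\Cref{fact:back:hind_maj_min}), it simultaneously minimizes every such overflow. It therefore suffices to show, for every $c$,
\begin{align*}
\min_{\mbf{z}\in H(\bl,\wh{\calP})}\sum_i(\mbf{z}(i)-c)^+ \;\le\; \min_{\mbf{z}\in H(\bl,\calP)}\sum_i(\mbf{z}(i)-c)^+ .
\end{align*}

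\emph{Upper bound on the left side.} For the nested instance I expect a closed form. Resources of epoch $j$ can be spread over $\bell^{\circ}(j-1)$ agents, so the right comparison quantity is the overflow of a "smoothed" profile computed from $\bell^{\circ}$. Concretely, I would exhibit an explicit hindsight allocation of $\wh{\calP}$: process epochs from last to first and fill the agents of each neighborhood levelly.

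\emph{Lower bound on the right side.} The key property of \bricklay is this: whenever it raises some agent from load $j-1$ to $j$ in round $t$, every agent that could feasibly receive that unit has load at least $j-1$. In polymatroid terms, the set of agents whose load is at least $j-1$ at that moment contains a tight set for the relevant exchange; this follows from the greedy optimality conditions behind \Cref{prop:into:bl}.

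Using this property, I would define for each level $j$ the set $S_j$ of agents that are "saturated" when level-$j$ units are placed. Submodularity then shows that every hindsight allocation must place at least as much mass on sets of size $|S_j|\le\bell^{\circ}(j-1)$ as the nested instance forces onto $[\bell^{\circ}(j-1)]$. Summing these set-wise mass constraints over levels above $c$ should give the overflow inequality.

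I would first carry out this lower-bound step for rank-$1$ (semi-matching) rounds, where the saturation property is immediate. I would then extend it to general polymatroids by viewing each $P_t$ as the sequence of unit increments made by \Cref{alg:back:bl}. Matching this lower bound to the nested closed form exactly is the step I expect to need the most care.
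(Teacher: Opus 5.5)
Your nestedness and load-preservation arguments are correct and follow the paper's epoch induction. The gap is in the hindsight comparison, which is the substance of the lemma. The threshold reduction is fine, but neither bound is proved, and the ingredient everything rests on is only asserted. What you need is a per-round inequality that holds for \emph{every} basis $\mbf{x}^*$ of $P_t$ and every level $k$:
\[
\mbf{x}^*\bigl(\{i : \bell_t(i) \ge k-1\}\bigr) \;\ge\; \bigl|\{(i,l)\in M_{\mbf{x}_t}^{\bell_{t-1}} : l \ge k\}\bigr| .
\]
In words: every hindsight basis of round $t$ places at least as many units on agents with \bricklay load $\ge k-1$ as \bricklay laid bricks at level $\ge k$ in that round.

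For rank-$1$ rounds this is your saturation observation. Your proposed extension to general polymatroids fails, for three reasons:
\begin{itemize}
\item The increments of \Cref{alg:back:bl} live in the shifted polymatroid $\wt{P}_t$, started from $\vec{0}$, not in $P_t$ started from $\bell_{t-1}$.
\item A polymatroid round is in general not a Minkowski sum of rank-$1$ rounds; the uniform matroid $U_{2,3}$ already is not. So an arbitrary basis $\mbf{x}^*$ cannot be split into unit choices to which saturation applies.
\item ``Every agent that could feasibly receive that unit'' has no meaning once the rank exceeds one, and single-swap optimality alone does not bound $\mbf{x}^*(\cdot)$ over all bases.
\end{itemize}
The paper supplies exactly this step as \Cref{lem:mmo:maj_min_bij}, using Hall's theorem and the exchange \Cref{lem:mmo:exchange} (strong basis exchange in an auxiliary matroid).

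Your tight-set hint can also be completed, but the tight set must sit on the low-load side. Let $\mbf{v}=\bell_{t-1}+\mbf{x}_t$, and let $T$ be the union of the minimal $\mbf{x}_t$-tight sets containing the agents with $\mbf{v}\le k-2$. Then $T$ is tight. It carries no $\mbf{x}_t$-mass on $\{\mbf{v}\ge k\}$, since any such swap would be a Robin Hood transfer contradicting majorization minimality. Hence $\mbf{x}^*(T^c)\ge r_t([n])-r_t(T)=\mbf{x}_t(T^c)\ge\mbf{x}_t(\{\mbf{v}\ge k\})$, and $T^c\subseteq\{\mbf{v}\ge k-1\}$.

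Given that inequality, sum over rounds. Monotonicity of loads, not submodularity, places each per-round set inside $L_{k-1}\define\{i:\bell(i)\ge k-1\}$. This yields $\mbf{z}(L_{j-1})\ge\sum_{j'\ge j}\bell^{\circ}(j')$ for every $\mbf{z}\in H(\bl,\calP)$, where the $L_{j-1}$ are nested and $|L_{j-1}|=\bell^{\circ}(j-1)$. After relabeling agents in decreasing order of $\bell$, these are exactly the Hall conditions for feasibility in $\wh{\calP}$. So $\opt(\bl,\calP)$ is feasible in $\wh{\calP}$ up to permutation, and majorization minimality of $\opt(\bl,\wh{\calP})$ finishes the proof. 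This is the paper's brick-relabeling argument in aggregate form.

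It also makes your closed form for the nested overflow, and the ``matching'' step you flagged as delicate, unnecessary. Your overflow route would close too, since the minimum overflow at $c$ over a polymatroid base set is $\max_S\bigl(m-r(S^c)-c|S|\bigr)$, with $r$ the summed rank function, and for $\wh{\calP}$ this is attained at the prefixes $[\bell^{\circ}(j-1)]$. As written, though, the proposal leaves the central per-round inequality unproved, so it does not yet establish the lemma.
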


The intuition behind~\Cref{alg:mmo:nest} is that we can exploit the conjugate of $\bell(\bl, \calP)$ to get an instance $\wh{\calP}$ on which the \bricklay allocation is preserved, while ensuring the hindsight solution $\opt(\bl, \calP)$ is feasible (up to permutation) in $\wh{\calP}$.
This immediately establishes~\Cref{lem:mmo:nested_worst}, as by majorization minimality we have $\opt(\bl, \calP) \succeq \opt(\bl, \wh{\calP})$.
%The \bricklay allocation $\bell\define \bell(\bl, \calP)$, encodes useful information in its conjugate.
To see why the conjugate helps, notice that $\bell^{\circ}(1)$ specifies how many agents must receive at least one resource.
Thus, to construct an epoch in which $\bell^{\circ}(1)$ agents get one resource while allowing for all possible hindsight allocations, we bring in $\bell^{\circ}(1)$ copies of a polymatroid $\wh{P}$ for which all indicators are a basis $\base(\wh{P}) = \{\chi_i\}_{i\in[n]}$.
The \bricklay allocation gives one resource to $\bell^{\circ}(1)$ unique agents (breaking ties in favor of agents with smaller indices by assumption); simultaneously, every allocation of the $\bell^{\circ}(1)$ resources to agents is feasible in hindsight.
This constitutes epoch $j = 1$.
All subsequent epochs $j$ abide by similar intuition, with a caveat:
agents in $[\bell^{\circ}(j-1)]$ have load $j-1$ and all other agents have load less than $j-1$.
To prevent low-load agents from getting allocations, we only allow allocations to agents in $[\bell^{\circ}(j-1)]$ in future rounds. 
This ensures our constructed instance induces the same \bricklay allocation as $\calP$, while allowing the largest set of hindsight allocations.

In~\Cref{fig:mmo:example1,fig:mmo:example2}, we illustrate our construction on an instance of online batched semi-matching from~\cite{harvey2006semi}.
In this setting, the $t^{\text{th}}$ polymatroid represents a batch $B_t\subseteq [m]$ of resources.
Each agent $i$ has a set of compatible resources $N_{i,t}\subseteq B_t$ that may be allocated.
The rank function $r_t$ of the polymatroid $P_t$ encoding every compatibility respecting allocation is a coverage function: $r(A) = \left|\bigcup_{i\in A} N_{i,t} \right|$.
We consider a strategy $\calP$ with $n=3$ agents, $m=5$ resources, and $2$ batches illustrated in \Cref{fig:mmo:example1}.
Neighborhoods $N_{i,t}$ are shown using edges from $i$ to resources in $N_{i,t}$.

\begin{figure}[ht]
    \centering
    \hspace{0.05\textwidth}
    \begin{minipage}[c]{0.35\textwidth}
        \centering
        \vspace{-25pt}
         \begin{tikzpicture}[
            every node/.style={circle, draw, minimum size=2mm, inner sep=1.5pt},
            node distance=12mm,
            scale=0.4,
            batch_oval1/.style={
                ellipse, 
                draw=darkblue!70, 
                fill=darkblue!15, 
                dashed, 
                inner xsep=2pt, % Extra horizontal padding
                inner ysep=-2pt   % Extra vertical padding
            },
            batch_oval2/.style={
                ellipse, 
                draw=redorange!70, 
                fill=redorange!15, 
                dashed, 
                inner xsep=2pt, % Extra horizontal padding
                inner ysep=-2pt   % Extra vertical padding
            }
        ]
    
            \node[draw=none] at (0,7.5) {\textbf{\footnotesize Agents}};
            \node[draw=none] at (5.5,9.5) {\textbf{\footnotesize Resources}};

            % Agents (V)
            \node (v1) at (0,6) {1};
            \node (v2) at (0,3.875) {2};
            \node (v3) at (0,1.75) {3};
            
            % Resources (U)
            \node (u1) at (5.5,7.75) {1};
            \node (u2) at (5.5,6) {2};
            \node (u3) at (5.5,4.25) {3};
            \node (u4) at (5.5,1.75) {4};
            \node (u5) at (5.5,0) {5};
            
            % Edges
            \draw (v1) -- (u1); \draw (v1) -- (u2); \draw (v1) -- (u3); \draw (v1) -- (u4);
            \draw (v2) -- (u1); \draw (v2) -- (u2); \draw (v2) -- (u3); \draw (v2) -- (u5);
            \draw (v3) -- (u5);

            % Batches
            \begin{scope}[on background layer]
                \node[batch_oval1, fit=(u1) (u2) (u3), label=right:Batch $B_1$] (batch1) {};
            \end{scope}
            \begin{scope}[on background layer]
                \node[batch_oval2, fit=(u4) (u5), label=right:Batch $B_2$] (batch2) {};
            \end{scope}
        \end{tikzpicture}
    \end{minipage}
    \hfill
    \begin{minipage}[c]{0.5\textwidth}        
        {\bf Brick-laying Allocation} \\
        Load after $B_1$: $\bell_1 = (2, 1, 0)$ \\
        Load after $B_2$: $\bell_2 = (3, 1, 1)$ \\
        
        {\bf Hindsight Optimal Allocation} \\
        Load after $B_1$: $\bell_1^* = (1, 2, 0)$ \\
        Load after $B_2$: $\bell_2^* = (2, 2, 1)$ \\
    \end{minipage}
    \vspace{-10pt}
    \caption{\small\em
        The graph on the left shows an instance $\calP$ of online batched semi-matching with $n=3$ agents and $m=5$ resources, arriving over two rounds.
        %Upon the arrival of a batch, nodes in the batch must be irrevocably allocated to neighboring agents.
        On the right, we give the intermediate loads of \bricklay and $\opt(\bl, \calP)$, where $\bell_t(i)$ is the load on agent $i$ after round $t$.
        In round $1$, \bricklay is indifferent between loads $(2,1,0)$ and $(1,2,0)$;
        in round 2, the only majorization-minimal allocation is $(1,0,1)$, yielding final load $(3,1,1)$.
        $\opt(\bl, \calP) = (2,2,1)$ is the majorization minimal load vector over the entire instance.
    }\label{fig:mmo:example1}
\end{figure}

\begin{figure}[ht]
    \centering
    \hspace{0.05\textwidth}
    \begin{minipage}[c]{0.35\textwidth}
        \centering
        \vspace{-25pt}
         \begin{tikzpicture}[
            every node/.style={circle, draw, minimum size=2mm, inner sep=1.5pt},
            node distance=12mm,
            scale=0.4,
            epoch_oval1/.style={
                ellipse, 
                draw=darkblue!70, 
                fill=darkblue!15, 
                dashed, 
                inner xsep=2pt, % Extra horizontal padding
                inner ysep=-2pt   % Extra vertical padding
            },
            epoch_oval2/.style={
                ellipse, 
                draw=redorange!70, 
                fill=redorange!15, 
                dashed, 
                inner xsep=0pt, % Extra horizontal padding
                inner ysep=0pt   % Extra vertical padding
            },
            epoch_oval3/.style={
                ellipse, 
                draw=emeraldgreen!70, 
                fill=emeraldgreen!15, 
                dashed, 
                inner xsep=0pt, % Extra horizontal padding
                inner ysep=0pt   % Extra vertical padding
            }
        ]
    
            \node[draw=none] at (0,7.5) {\textbf{\footnotesize Agents}};
            \node[draw=none] at (5.5,9.5) {\textbf{\footnotesize Resources}};

            % Agents (V)
            \node (v1) at (0,6) {1};
            \node (v2) at (0,3.875) {2};
            \node (v3) at (0,1.75) {3};
            
            % Resources (U)
            \node (u1) at (5.5,7.75) {1};
            \node (u2) at (5.5,6) {2};
            \node (u3) at (5.5,4.25) {3};
            \node (u4) at (5.5,1.75) {4};
            \node (u5) at (5.5,-0.25) {5};
            
            % Edges
            \draw (v1) -- (u1); \draw (v1) -- (u2); \draw (v1) -- (u3); \draw (v1) -- (u4); \draw (v1) -- (u5);
            \draw (v2) -- (u1); \draw (v2) -- (u2); \draw (v2) -- (u3); \draw (v2) -- (u4);
            \draw (v3) -- (u1); \draw (v3) -- (u2); \draw (v3) -- (u3); \draw (v3) -- (u4);

            \draw[thick, decoration={brace, raise=5pt}, decorate]
                (u1.north east) -- (u3.south east)
                node[midway, xshift=31pt, draw=none] {Epoch 1};

            \draw[thick, decoration={brace, raise=5pt}, decorate]
                (u4.north east) -- (u4.south east)
                node[midway, xshift=31pt, draw=none] {Epoch 2};

            \draw[thick, decoration={brace, raise=5pt}, decorate]
                (u5.north east) -- (u5.south east)
                node[midway, xshift=31pt, draw=none] {Epoch 3};
        \end{tikzpicture}
    \end{minipage}
    \hfill
    \begin{minipage}[c]{0.5\textwidth}        
        {\bf Brick-laying Allocation} \\
        Load after epoch 1: $\bell_3 = (1, 1, 1)$ \\
        Load after epoch 2: $\bell_4 = (2, 1, 1)$ \\
        Load after epoch 3: $\bell_5 = (3, 1, 1)$ \\
        
        {\bf Mapped Hindsight Allocation} \\
        Load after epoch 1: $\bell_3^* = (1, 1, 1)$ \\
        Load after epoch 2: $\bell_4^* = (1, 2, 1)$ \\
        Load after epoch 3: $\bell_5^* = (2, 2, 1)$ \\
    \end{minipage}
    \vspace{-15pt}
    \caption{\small\em
        The graph on the left illustrates the nested instance $\wh{\calP}$ with $5$ rounds output by~\Cref{alg:mmo:nest} given instance $\calP$.
        On the right, we show the intermediate loads of \bricklay after each epoch and the loads of a hindsight solution we construct in \Cref{lem:mmo:nested_worst}.
        Note that this is not $\opt(\bl, \wh{\calP})$, but rather, demonstrates that $\opt(\bl, \calP)$ is still feasible (up to permutation) in $\wh{\calP}$. 
    }\label{fig:mmo:example2}
\end{figure}

\newpage 

Against the strategy $\calP$ in \Cref{fig:mmo:example1}, \bricklay produces load vector $\bell = (3,1,1)$.
All three agents get at least one resource, and the first agent gets three resources; thus, the conjugate of the load vector is $\bell^{\circ} = (3, 1, 1, 0, 0)$.
Given $\calP$ as input, \Cref{alg:mmo:nest} constructs a nested strategy $\wh{\calP}$ with 3 non-trivial epochs, one for each non-zero element of the conjugate.
The first epoch has $\bell^{\circ}(1) = 3$ polymatroids with the bases $\chi_1,\chi_2,\chi_3$, since $\bell^{\circ}(0) = n = 3$.
The second epoch has $\bell^{\circ}(2) = 1$ polymatroid with again with bases $\chi_1,\chi_2,\chi_3$ because $\bell^{\circ}(1) = 3$.
Finally, the third epoch has $\bell^{\circ}(3) = 1$ polymatroid with a single basis $\chi_1$ since $\bell^{\circ}(2) = 1$.
\Cref{fig:mmo:example2} illustrates strategy $\wh{\calP}$.

The most technical hurdle of \Cref{lem:mmo:nested_worst} is proving that the hindsight optimal  $\opt(\bl, \calP)$ remains feasible (up to permutation) when the adversary plays strategy $\wh{\calP}$.
In our analysis, we leverage the brick representation of allocations.

\begin{definition}[Brick Representation]\label{def:back:multi_set}
    The brick representation of $\mbf{x}\in P$ on $\bell$ is $M_\mbf{x}^{\bell} = \{(i, l)\in [n]\times \Z_{\ge 0} \mid \bell(i) + 1 \le l \le \bell(i) + \mbf{x}(i)\}$.
    An element $(i,l)\in M_{\mbf{x}}^{\bell}$ is called a brick of $\mbf{x}$.
\end{definition}

Since the \bricklay allocations on $\calP$ and $\wh{\calP}$ are equivalent up to permutation, the union of their brick representations over all rounds is equivalent up to a relabeling of the agents.
This yields a natural bijection between \bricklay bricks on strategy $\calP$ with those bricks on strategy $\wh{\calP}$.
Our approach is to compose this natural bijection with a bijection from \bricklay bricks on $\calP$ with bricks in the optimal hindsight solution on $\calP$ to reconstruct the optimal hindsight allocation on $\wh{\calP}$ up to permutation.
This second bijection, described in \Cref{lem:mmo:maj_min_bij}, is carefully selected so that the reconstructed allocation remains feasible.
At a high level, if we map a hindsight brick $(i^*,l^*)$ to a later-arriving epoch $j$, we have to ensure it will have sufficiently high \bricklay load $\bell_h(i^*)$ to ensure feasibility; we require $i^*$ to have load at least $j$ so the mapped brick is feasible in $[\bell^{\circ}(j)]$.
\Cref{lem:mmo:maj_min_bij} guarantees that each brick $(i^*, l^*)$ has sufficiently large \bricklay load.

\begin{restatable}[Majorization Minimality Bijection]{lemma}{majminbij}\label{lem:mmo:maj_min_bij}
    Fix load vector $\bell\in\Z_{\ge 0}^n$, polymatroid $P$, and arbitrary basis $\mbf{x}^*\in \base(P)$.
    A basis $\mbf{x}\in\base(P)$ with $\mbf{x} + \bell \preceq \mbf{x}' + \bell$ for all bases $\mbf{x}'\in \base(P)$ admits a bijection $\phi: M_{\mbf{x}^*}^{\bell} \to M_{\mbf{x}}^{\bell}$ where $(i, l) = \phi(i^*, l^*)$ implies $\bell(i^*) + \mbf{x}(i^*) \ge l - 1$ for all bricks $(i^*, l^*)\in M_{\mbf{x}^*}^{\bell}$.
\end{restatable}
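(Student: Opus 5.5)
The plan is to build $\phi$ from Hall's marriage theorem, and to verify Hall's condition using the exchange characterization of majorization-minimal polymatroid bases together with tight sets. Write $L \define \bell + \mbf{x}$ for the load after the minimal basis. Form the bipartite graph between $M_{\mbf{x}^*}^{\bell}$ and $M_{\mbf{x}}^{\bell}$ with an edge from $(i^*,l^*)$ to $(i,l)$ whenever $l \le L(i^*)+1$. Both sides have size $\rank(P)$, since $\mbf{x}^*([n]) = \mbf{x}([n]) = \rank(P)$. So a perfect matching is exactly the required bijection, and it suffices to check Hall's condition for the $\mbf{x}^*$-side.

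\textbf{Reduction to threshold sets.} Fix a set $S$ of $\mbf{x}^*$-bricks and let $\tau \define \max\{L(i^*) : (i^*,l^*)\in S\}$. The neighborhood of $S$ contains every $\mbf{x}$-brick of level at most $\tau+1$. Also $S$ is contained in the set of all $\mbf{x}^*$-bricks owned by agents in $A_\tau \define \{i : L(i)\le \tau\}$, which has size $\mbf{x}^*(A_\tau)$. Hence it suffices to show, for every integer $\tau$,
\begin{align*}
    \mbf{x}^*(A_\tau) \le \bigl|\{(i,l)\in M_{\mbf{x}}^{\bell} : l \le \tau+1\}\bigr|.
\end{align*}
Every $\mbf{x}$-brick on an agent $i$ with $L(i)\le \tau+1$ has level at most $L(i)\le\tau+1$. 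Therefore the right-hand side is at least $\mbf{x}(T)$, where $T \define \{i : L(i)\le \tau+1\}$. The goal becomes $\mbf{x}^*(A_\tau)\le \mbf{x}(T)$.

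\textbf{Local exchange condition.} First I record the exchange property of a majorization-minimal basis. If $\mbf{x}-\chi_j+\chi_i\in\base(P)$, then $L(i)\ge L(j)-1$. Otherwise moving one unit from $j$ to $i$ is a Dalton--Pigou transfer that strictly decreases $L$ in majorization (equivalently, strictly decreases the sum of squares), contradicting minimality. Now set $U \define \{j : L(j)\ge \tau+2\}$, the complement of $T$. For $i\in A_\tau$ and $j\in U$ we have $L(i)\le\tau< L(j)-1$, so no exchange from $j$ into $i$ is feasible.

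\textbf{Tight set argument.} I then invoke standard polymatroid facts. Tight sets, i.e.\ sets $D$ with $\mbf{x}(D)=r(D)$, are closed under union and intersection by submodularity. For each $i$, the minimal tight set $\mathrm{dep}(\mbf{x},i)$ containing $i$ consists exactly of $i$ together with those $j$ for which $\mbf{x}-\chi_j+\chi_i\in P$ with $\mbf{x}(j)>0$. Since $\mbf{x}$ is a basis, $[n]$ is tight, so these dependency sets exist.

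Let $D \define \bigcup_{i\in A_\tau}\mathrm{dep}(\mbf{x},i)$. Then $D$ is tight and contains $A_\tau$. By the previous step, $D$ contains no $j\in U$ with $\mbf{x}(j)>0$, so $\mbf{x}(D)\le\mbf{x}(T)$. Because $\mbf{x}^*\in P$, this gives
\begin{align*}
    \mbf{x}^*(A_\tau)\le\mbf{x}^*(D)\le r(D)=\mbf{x}(D)\le\mbf{x}(T),
\end{align*}
which is Hall's condition, and Hall's theorem then yields $\phi$.

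\textbf{Main obstacle.} The delicate step is this tight-set/dependency argument. It requires stating carefully the exchange characterization (the Frank--Murota style "no improving exchange" condition, which follows from \Cref{prop:into:bl}) and the identification of $\mathrm{dep}(\mbf{x},i)$ with the feasible exchanges. The degenerate case where $i$ itself has $\mbf{x}(i)=0$ also needs a check, but that case is harmless here.
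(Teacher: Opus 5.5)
Your proof is correct, and its skeleton matches the paper's: the same bipartite graph, Hall's theorem, and a reduction to a threshold inequality. After taking complements, your target $\mbf{x}^*(A_\tau)\le\mbf{x}(T)$ is exactly the paper's Hall condition $\mbf{x}(L_k)\le\mbf{x}^*(L_{k-1})$ with $k=\tau+2$, where $L_k=\{i:\bell(i)+\mbf{x}(i)\ge k\}$. The paper checks Hall from the $\mbf{x}$-side and argues by contradiction, which is equivalent. The genuine difference is in how the improving transfer is obtained. The paper goes through its Inclusion/Exclusion Exchange lemma (\Cref{lem:mmo:exchange}), which it proves by building an auxiliary matroid on $[n]\times[r]$ and invoking the bijective basis-exchange theorem (Schrijver, Cor.~39.12a). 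You instead work in the lattice of $\mbf{x}$-tight sets. The union $D$ of the dependence sets of the low-load agents is tight, and the no-improving-transfer property forces $\mbf{x}(j)=0$ for every high-load $j\in D$. The chain $\mbf{x}^*(A_\tau)\le r(D)=\mbf{x}(D)\le\mbf{x}(T)$ then follows directly. The only polymatroid fact you need is that $j\in\mathrm{dep}(\mbf{x},i)$, $j\neq i$, $\mbf{x}(j)\ge 1$ implies $\mbf{x}+\chi_i-\chi_j\in P$. This is immediate from integrality: every set containing $i$ but not $j$ is non-tight and so has slack at least $1$. Your ``exactly'' characterization also holds, by monotonicity, but you never use the converse. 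So your argument is shorter and self-contained, relying only on submodularity and integrality. Read contrapositively, it is also an elementary proof of the paper's exchange lemma itself. The paper's route buys a standalone exchange statement for arbitrary pairs of bases, at the price of heavier matroid machinery. One minor point: the no-improving-transfer property comes straight from the assumed minimality of $\mbf{x}$ (the sum of squares is Schur-convex), not from \Cref{prop:into:bl}.
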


\Cref{lem:mmo:maj_min_bij} is easier to argue about in the special case when polymatroid $P$ is rank $1$.
In this case, every basis has a single brick in its brick representation.
A majorization minimal basis will allocate a brick $(i, \bell(i) + 1)$ to the agent $i$ with the lowest load.
Since $\bell(i)$ had minimal load, every other $i^*$ satisfies $\mbf{x}(i^*) + \bell(i^*) \ge (\bell(i) + 1) - 1$; any other singleton brick will admit the desired bijection.
The proof of \Cref{lem:mmo:maj_min_bij} in its full generality necessitates polymatroid machinery, so we defer it to \Cref{app:sec:bl_mmo}.
These ideas may be of independent interest.

\section{Majorization Minimax-Optimality: Formal Proofs}\label{sec:mmo}
With the intuition of \Cref{sec:tech_overview} in mind, we prove \Cref{lem:mmo:nested_worst}.
In our proofs, the $i^{\text{th}}$ component of the vectors $\mbf{x}^\uparrow$ and $\mbf{x}^\downarrow$ are the $i^{\text{th}}$ smallest and $i^{\text{th}}$ largest of a vector $\mbf{x}$, respectively.

\nested*
\begin{proof}
    Define $(\mbf{x}_t)_{t\in [h]}$ to be the allocations produced by \bricklay on strategy $\calP$, which admit load vectors $\bell_{t} \define \sum_{s\in[t]}\mbf{x}_s$ and $\bell_0 = \vec{0}$.
    We also take $\bell\define\bell_h = \bell(\bl, \calP)$ for convenience.
    The conjugate $\bell^{\circ}(j)$ is non-increasing in $j$, thus the strategy $\wh{\calP}$ is nested.
    
    We establish some notation.
    Let $\wh{P}_1,\dots, \wh{P}_m$ be the polymatroids that are deterministically played by $\wh{\calP}$.
    Notice the sequence has length $m$, because each polymatroid is associated with one resource.
    Define $\tau_j = \sum_{k\in[j]} \bell^{\circ}(k)$ to be the last round in epoch $j$.
    Take $(\wh{\mbf{x}}_\tau)_{\tau\in[m]}$ and $(\wh{\bell}_\tau)_{\tau\in[m]\cup\{0\}}$ to be the allocations and load vectors generated by \bricklay on $\wh{\calP}$.
    Let $(\mbf{x}_t^*)_{t\in[h]}$ be a feasible allocation for the polymatroids realized by $(\bl, \calP)$ with $\opt(\bl, \calP) = \sum_{t\in [h]} \mbf{x}_t^*$;
    this hindsight solution is deterministic since $\bl$ and $\calP$ are deterministic.
    Finally, $M_t \define M_{\mbf{x}_t}^{\bell_{t-1}}$ and $M_t^* \define M_{\mbf{x}_t^*}^{\bell_{t-1}}$ are the brick representations of $(\mbf{x}_t)_{t\in [h]}$ and $(\mbf{x}_t^*)_{t\in [h]}$ on round $t$, respectively. 

    We show that $\bell(\bl, \calP) \sim \bell(\bl, \wh{\calP})$.
    For induction on epochs $j$, assume that $\wh{\bell}_{\tau_j}(i) = j$ for all $i\in [\bell^{\circ}(j)]$.
    The claim holds for $j=0$ since $\tau_0 = 0$ and $\wh{\bell}_0 = \vec{0}$.
    Consider epoch $j+1$.
    In this epoch, the adversary presents $\bell^{\circ}(j+1)$ copies of polymatroid $\wh{P}$ with rank function $\wh{r}(A) = \1{A\cap [\bell^{\circ}(j)] \neq \emptyset}$; the feasible bases are $\{\chi_i\mid i\in [\bell^{\circ}(j)]\} = \base(\wh{P})$.
    Since all agents in $[\bell^{\circ}(j)]$ have the same level at the beginning of the epoch and \bricklay allocates to the lowest load agent, breaking ties in favor of lower indices, \bricklay will select bases $\chi_1,\dots,\chi_{\bell^{\circ}(j+1)}$ in epoch $j+1$.
    For $i\in [\bell^{\circ}(j+1)]$ this gives $\wh{\bell}_{\tau_{j+1}}(i) = \wh{\bell}_{\tau_{j}}(i) + 1 = j+1$.
    The inductive claim implies that $\wh{\bell}_m(i) \ge \wh{\bell}_{\tau_j}(i) \ge j$ where $j$ is the latest epoch with $i\in [\bell^{\circ}(j)]$.
    This inequality, in addition to conjugation giving $j = \bell^{\downarrow}(i)$ and yields $\wh{\bell}_m(i) \ge \bell^{\downarrow}(i)$.
    By definition $\wh{\bell}_m([n]) = \bell([n]) = m$; it must be that $\wh{\bell}_m(i) = \bell^{\downarrow}(i)$ and further $\bell(\bl, \calP) = \bell \sim \wh{\bell}_m(i) = \bell(\bl,\wh{\calP})$.

    We now show $\opt(\bl, \calP)\succeq \opt(\bl, \wh{\calP})$, by mapping bricks in $(\mbf{x}_t^*)_{t\in [h]}$ to a new allocation $(\wh{\mbf{x}}_\tau^*)_{\tau\in [m]}$ which is feasible on $\wh{\calP}$ and yields a load vector $\wh{\bell}^*$ sharing an equivalence class with that of $\sum_{t\in[h]}\mbf{x}_t^* = \opt(\bl, \calP)$.
    This implies the claim, as $\opt(\bl, \wh{\calP}) \preceq \hat{\bell}^*\sim \opt(\bl, \calP)$ by the majorization-minimality of optimal hindsight solutions.
    Each hindsight allocation brick $(i^*,l^*)$ has an associated agent $i^*$ and level $l^*$, which we map independently.
    First, we map agents according to the bijection $\kappa:[n]\to[n]$ for which $\bell(i) < \bell(i')$ implies $\kappa(i) > \kappa(i')$ for all $i,i'$.
    To map levels to a round, we compose several functions.
    First, for each $t\in[h]$ take $\phi_t: M_t^* \to M_t$ to be a bijection with $(i, l) = \phi_t(i^*, l^*)$ implying $\bell_{t-1}(i^*) + \mbf{x}_t(i^*) \ge l - 1$ for all $(i^*,l^*)\in M_t^*$, which is guaranteed to exist by \Cref{lem:mmo:maj_min_bij}.
    The last function is a bijection $\delta: \bigcup_{t\in[h]} M_t \to [m]$ that maps bricks of $(\mbf{x}_t)_{t\in [h]}$ to a round on strategy $\wh{\calP}$.
    We choose $\delta$ so that $\delta(i, l) = \tau$, where $\tau$ is in epoch $j\define l$.
    Such a bijection exists since epoch $j$ has $\bell^{\circ}(j)$ rounds, which is exactly the number of bricks in $M_t$ with level $l$.
    Now we can construct $(\wh{x}_\tau^*)_{\tau\in[m]}$.
    For each $t\in [h]$ and $(i^*,l^*)\in M_t^*$, we select allocation $\wh{\mbf{x}}_{\tau}^* \gets \chi_{\kappa(i^*)}$ on round $\tau = \delta\circ \phi_t(i^*,l^*)$.

    It remains to show that $(\wh{\mbf{x}}_\tau^*)_{\tau\in [m]}$ is feasible on $\wh{\calP}$ and $\sum_{t\in [h]} \mbf{x}_\tau^* \sim \sum_{\tau\in[m]} \wh{\mbf{x}}_t^*$.
    Since we relabeled agents according to $\kappa$, the loads of agents will be equal up to permutation.
    This gives $\sum_{t\in [h]} \mbf{x}_\tau^* \sim \sum_{\tau\in[m]} \wh{\mbf{x}}_t^*$.
    We now consider feasibility.
    We claim the composition $\delta \circ \phi_t$ maps each input to a different round.
    This holds because each bijection $\phi_t$ has codomain $M_t$, which are disjoint for all $t$, and $\delta$ is a bijection. 
    Lastly, we argue $\wh{\mbf{x}}_\tau^* \in \wh{P}_\tau$ for all $\tau\in[m]$.
    For any $t\in[h]$ and $(i^*, l^*)\in M_t^*$ define $(i,l) \define \phi_t(i^*, l^*)$.
    Since brick $(i^*, l^*)$ will appear in epoch $l$, we require that $\kappa(i^*) \in [\bell^{\circ}(l-1)]$ for our allocation to be a base of the polymatroid $P$ presented in epoch $l$.
    \Cref{lem:mmo:maj_min_bij} gives $\bell(i^*) \ge \bell_{t-1}(i^*) + \mbf{x}_t(i^*) \ge l - 1$, which implies that $i^*$ has one of the $\bell^{\circ}(l-1)$ largest loads among agents.
    Our bijection on agents gives $\kappa(i^*)\in [\bell^{\circ}(l - 1)]$, which completes the proof of feasibility.
\end{proof}

We now present the second half of our supporting claims to prove \Cref{thm:intro:mmo}.
The goal is to show that \bricklay performs weakly better than any alternative allocation strategy against nested strategies.
This involves relabeling of agents so that alternative strategies that do not maintain relatively equal loads on agents cannot allocate to low load agents in the future.
We call the adversarial strategy that performs this careful relabeling a nested response strategy. 

\begin{definition}[Nested Response Strategy]
    Let $z_1,\dots,z_m\in\Z_{\ge 0}$ be a non-increasing sequence of integers.
    Given a history of play prior to round $t$, the nested response strategy $\calP$ for allocation strategy $\calA$ seeded by $(z_t)_{t\in [m]}$ presents the polymatroid $P_t$ on round $t$ with rank $r_t(A) = \1{A\cap N_t \neq \emptyset}$, where $N_t = \arg\max\{\bell_{t-1}(A) \mid |A| = z_t\}$ for intermediate loads $\bell_{t-1}$ produced by $\calA$.
\end{definition}

\begin{observation}[Nested Response Strategy]
    The nested response strategy $\calP$ for any allocation strategy $\calA$ seeded by non-increasing $(z_t)_t$ is a nested strategy.
\end{observation}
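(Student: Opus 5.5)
The plan is to check the definition of a nested strategy directly. Fix any allocation strategy $\calA'$ played against $\calP$, and fix any realization of the history. We must show two things. First, every presented polymatroid has rank function of the form $r_t(A) = \1{A\cap N_t\neq\emptyset}$. Second, the realized neighborhoods satisfy $N_1\supseteq\dots\supseteq N_m$. The first holds by construction. For the second, we must resolve the ambiguity in $N_t = \arg\max\{\bell_{t-1}(A)\mid |A| = z_t\}$, which is a set of maximizers rather than a single set. We do this with a fixed tie-breaking rule: order agents by decreasing load, breaking ties by a fixed rule, and let $N_t$ be the first $z_t$ agents in that order. This is what the nested response strategy is meant to do.

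The key step is induction on $t$, proving $N_{t+1}\subseteq N_t$. The tie-breaking rule is chosen to be consistent with the previous neighborhood: among agents with equal current load, prefer agents in $N_t$, and then the smaller index. In round $t$ the algorithm must select a basis $\chi_i$ with $i\in N_t$. So $\bell_t = \bell_{t-1}+\chi_i$, and the only change is that one agent inside $N_t$ gains one unit of load.

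We then argue that $N_t$ remains a set of $z_t$ top-loaded agents under $\bell_t$. Before the round, every agent in $N_t$ had load at least that of every agent outside $N_t$. Raising the load of an agent in $N_t$ preserves this property. The consistent tie-breaking then places all of $N_t$ ahead of all outside agents in the new ordering. Since $z_{t+1}\le z_t$, the first $z_{t+1}$ agents in the new ordering lie inside $N_t$, which gives $N_{t+1}\subseteq N_t$.

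For a randomized $\calA'$, this argument holds pathwise for every realization, so nestedness holds with probability $1$.

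The main obstacle is the tie-breaking subtlety. Under an arbitrary $\arg\max$ selection, an agent outside $N_t$ that was tied with the boundary of $N_t$ could be chosen at round $t+1$, and nestedness would fail. We resolve this by fixing the selection rule above, which favors the previous neighborhood. This is without loss of generality for the strategy's purpose, because any maximizer yields the same multiset of loads in the neighborhood.
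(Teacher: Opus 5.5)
Your proposal is correct and takes essentially the paper's route, made precise: the round-$t$ basis is some $\chi_i$ with $i\in N_t$, so $N_t$ remains a top-$z_t$ set under $\bell_t$, and $z_{t+1}\le z_t$ then forces $N_{t+1}\subseteq N_t$. Your resolution of the $\arg\max$ ties addresses a real gap in the paper's one-line argument, since an arbitrary choice among maximizers can break nestedness, and you handle it correctly; a static tie-break by agent index would also suffice, because an agent of $N_t$ never falls behind an agent outside $N_t$ in the order by decreasing load, then increasing index.
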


\begin{proof}
    Let $P_1,\dots, P_m$ be the polymatroids realized by $\calP$ against any allocation strategy.
    Loads produced by the allocation strategy are non-decreasing each round and $(z_t)_t$ is non-increasing, so by construction $i\in N_t$ implies $i\in N_{t-1}$.
    The strategy $\calP$ is nested.
\end{proof}

We now show that the \bricklay algorithm is the most resistant algorithm to nested response strategies for any seed sequence.

\begin{restatable}[\bricklay vs. Alternatives]{lemma}{vsalt}\label{lem:mmo:vs_alt}
    Fix a non-increasing integer sequence $(z_t)_t$ and strategy $\calA'$.
    Let $\wh{\calP}$ and $\calP'$ be the nested response strategies seeded with $(z_t)_t$ on $\bl$ and $\calA'$, respectively.
    These satisfy $\bell(\bl, \wh{\calP}) \preceq \bell(\calA', \calP')$ and $\opt(\bl, \wh{\calP}) \sim \opt(\calA', \calP')$ with probability $1$.
\end{restatable}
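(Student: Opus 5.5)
We prove the two claims separately, starting with the hindsight side. Under a nested response strategy seeded by $(z_t)_t$, each round $t$ presents a rank-$1$ polymatroid whose bases are $\{\chi_i \mid i\in N_t\}$, where $|N_t| = z_t$ and the sets satisfy $N_1\supseteq\dots\supseteq N_m$. The set of hindsight-feasible load vectors is therefore determined, up to relabeling of agents, by the sequence $(z_t)_t$ alone.

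Concretely, let $\sigma$ be a permutation sending $N_t$ (realized against $\bl$) onto $N_t'$ (realized against $\calA'$) for every $t$ simultaneously. Such a $\sigma$ exists because both chains are nested with the same sizes: map $N_m$ to $N_m'$, then $N_{m-1}\setminus N_m$ to $N_{m-1}'\setminus N_m'$, and so on. Then $\sigma$ induces a bijection $H(\bl,\wh{\calP}) \to H(\calA',\calP')$ that preserves the multiset of coordinates. Hence the majorization-minimal elements, which exist by \Cref{fact:back:hind_maj_min}, are permutations of each other, giving $\opt(\bl,\wh{\calP})\sim\opt(\calA',\calP')$ for every realization of $\calA'$'s randomness.

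For the allocation side we need $\bell(\bl,\wh{\calP})\preceq\bell(\calA',\calP')$. The plan is to prove by induction on $t$ the invariant $\bell_t^{\bl}\preceq \bell_t'$ for intermediate loads, together with a top-set refinement. Since $N_t$ collects the $z_t$ highest-load agents, what matters is the restricted vector $\bell_{t-1}|_{N_t}$, which consists of the top $z_t$ sorted entries.

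The cleanest invariant is \emph{weak supermajorization restricted to top sets}. For every $k$, the sum of the $k$ smallest among the top $z$ entries of $\bell^{\bl}_t$ is at least the corresponding sum for $\bell'_t$, for all $z\le z_{t+1}$. Equivalently, the conjugate (level-count) profile of the top block under $\bl$ dominates that of $\calA'$ in the appropriate sense.

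Bricklaying adds one unit to the minimum entry of the top-$z_t$ block. This is the greedy operation that maximally increases every partial sum of smallest entries, so it preserves the invariant against any other single-unit addition within the top $z_t$ block made by $\calA'$. The nesting $z_{t+1}\le z_t$ then lets the invariant pass to smaller top blocks. Coordinates outside $N_t$ are frozen for both algorithms, and these are exactly the lowest coordinates. Combining the frozen bottom parts with the top block invariant at $t=m$, and using equal totals $m$, should yield full majorization.

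The main obstacle is choosing the right inductive invariant. Plain majorization of $\bell_t$ is not enough, because agents leaving $N_{t+1}$ could hold different values under the two algorithms. I would handle this with a two-part statement:
\begin{itemize}
\item the bottom $n-z_{t+1}$ sorted entries of $\bell^{\bl}_t$ majorization-dominate those of $\bell'_t$ (i.e., their partial sums of smallest entries are at least as large);
\item the top $z_{t+1}$ block under $\bl$ is ``flattest'' given its total, with the totals ordered compatibly.
\end{itemize}
The induction step would be verified with the standard lemma that adding a unit to the minimum of a vector yields a vector majorized by any other unit addition, applied to sorted vectors. That lemma follows quickly from \Cref{def:intro:maj}.
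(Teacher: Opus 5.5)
Your hindsight argument is the paper's, made more explicit by the layer-by-layer permutation $\sigma$, and that part is fine. The load comparison has two concrete problems.

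First, the invariant you call cleanest is false. Take $n=3$ and $(z_t)=(3,3,1)$. After two rounds \bricklay has loads $(1,1,0)$, while an $\calA'$ that gives both units to agent $1$ has $(2,0,0)$. At $z=k=z_3=1$ the invariant then demands $1\ge 2$. In general, at $k=z$ it asserts that \bricklay's top-$z$ sum is at least $\calA'$'s, which is the reverse of what $\bell(\bl,\wh{\calP})\preceq\bell(\calA',\calP')$ needs.

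Second, the greedy argument you give for it, and the ``standard lemma'' you cite for the two-part invariant, only compare two unit additions to the \emph{same} vector. Here \bricklay and $\calA'$ add to different active blocks with different totals, so that lemma does not close the induction step.

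Your two-part invariant is nevertheless correct, and it gives a proof once the step uses the right fact. A balanced integer vector is majorized by every integer vector of the same length and total, and its top-$p$ sums are nondecreasing in the total. Let $B\ge B'$ be the frozen (bottom $n-z_t$) totals before round $t$. After round $t$ the active blocks $\mbf{b}$ (\bricklay, balanced) and $\mbf{a}$ ($\calA'$) have totals $t-B\le t-B'$, so $\mathrm{top}_p(\mbf{b})\le\mathrm{top}_p(\mbf{a})$ for all $p$. Frozen entries never exceed active ones, so freezing the lowest $j\le z_t-z_{t+1}$ active entries extends the bottom partial sums by
\[
B+s_j(\mbf{b})=t-\mathrm{top}_{z_t-j}(\mbf{b})\ \ge\ t-\mathrm{top}_{z_t-j}(\mbf{a})=B'+s_j(\mbf{a}),
\]
where $s_j$ is the sum of the $j$ smallest entries. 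The smaller active block inherits balance, and setting $z_{m+1}:=0$ finishes the induction. Equivalently, the invariant is ``$\bell^{\bl}_t\preceq\bell'_t$ and \bricklay's active block is balanced''.

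Once repaired, your route genuinely differs from the paper's. The paper inducts over agents in increasing order of final load rather than over rounds. For agent $k$ with last eligible round $\tau$:
\begin{itemize}
\item the agents with smaller final load are already frozen by $\tau$;
\item \bricklay's balance gives $\wh{\bell}_\tau(k)=\floor{\wh{\bell}_\tau([k])/k}$;
\item agent $k$ is the minimum of $[k]$ at $\tau$ under $\calA'$, giving $\bell'_\tau(k)\le\floor{\bell'_\tau([k])/k}$;
\item monotonicity of $x\mapsto\floor{(\tau+(k-1)x)/k}$ carries the bound on bottom sums from $k$ to $k-1$.
\end{itemize}
The paper's version is shorter once freezing times are identified. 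Yours avoids the floor recursion, handles several agents freezing in the same round uniformly, and yields majorization at every intermediate round as a by-product.
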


\begin{proof}
    We first focus on the property $\opt(\bl, \wh{\calP}) \sim \opt(\calA', \calP')$.
    Let $P_1',\dots,P_m'$ be the polymatroids realized by $(\calA', \calP')$.
    All realizations of the sequence $(P_t')_{t\in [m]}$ admit the same feasible allocations, up to permutation of the agents.
    In addition, these realizations further share allocations, up to permutation, with the sequence $(\wh{P}_t)_t$.
    This implies that the majorization minimal element of every realization $(\calP_t')_t$ is equivalent up to permutation to the majorization minimal element of $(\wh{P}_t)_t$.
    This immediate yields $\opt(\bl, \wh{\calP}) \sim \opt(\calA', \calP')$ with probability $1$.

    We now show $\bell(\bl, \wh{\calP}) \preceq \bell(\calA', \calP')$ with probability $1$.
    Fix an arbitrary realization of polymatroids $(\wh{\calP}_t)_t$, allocations $(\wh{\mbf{x}}_t)_t$, and loads $(\wh{\bell}_t)_t$ produced by profile strategy $(\bl, \wh{\calP})$.
    Likewise, let $(\calP_t')_t$, $(\mbf{x}_t')_t$, and $(\bell_t')_t$ be an arbitrary realization of the same on profile $(\calA', \calP')$.
    Without loss of generality, we relabel agents so that $\wh{\bell}_m(1)\ge\dots\ge\wh{\bell}_m(n)$ and $\bell_m'(1)\ge\dots\ge\bell_m'(n)$; this also implies that $\wh{N}_t = N_t'$ for each $t$.
    For induction on $k$ from $n$ to $0$, we assume that $\wh{\bell}_m([n]\setminus [k]) \ge \bell_m'([n]\setminus [k])$.
    The claim holds for $k = n$ by definition.
    Consider the claim for $k-1$.
    Let $\tau = \max\{t \mid k-1 \in N_t\}$ be the latest round on which there is a feasible allocation to $k-1$.
    The total number of allocated resources up to including this round is $\tau$.
    Further, the total load on agents in $[k]$ satisfy: $\wh{\bell}_{\tau}([k]) = \tau - \wh{\bell}_{\tau}([n]\setminus [k]) = \tau - \wh{\bell}_m([n]\setminus [k])$, where the last equality holds because agents in $[n]\setminus [k]$ will no longer receive allocations after round $\tau$.
    The same equation holds with $\wh{\bell}$ replaced by $\bell'$.
    \bricklay will allocate resources on $\wh{\calP}$ as evenly as possible; loads will differ by at most $1$ on the agents in $[n]\setminus [k]$.
    Further, $N_\tau$ is selected to contain the agents with low load.
    This gives: $\wh{\bell}_{\tau}(k) = \floor{\frac{\wh{\bell}_{\tau}([k])}{k}}$
    On the other hand, $\calA'$ makes no such guarantees on the evenness of loads $\bell'$ on agents in $[k]$.
    This in conjunction with the fact that $N_{\tau}$ is selected to contain agents with smallest load, yields $\bell_\tau'(k) \le \floor{\frac{\bell_{\tau}'([k])}{k}}$.
    Combining all these facts proves the inductive claim:
    {\allowdisplaybreaks
    \begin{align*}
        \wh{\bell}_m([n]\setminus [k-1])
        &= \wh{\bell}_\tau([n]\setminus [k]) + \wh{\bell}_\tau(k)
        &&\text{linearity and }\tau\text{ definition} \\
        &= \wh{\bell}_\tau([n]\setminus [k]) + \floor{\frac{\wh{\bell}_{\tau}([k])}{k}}
        &&\bl\text{ load equity} \\
        &= \wh{\bell}_\tau([n]\setminus [k]) + \floor{\frac{\tau - \wh{\bell}_m([n]\setminus [k])}{k}}
        &&\text{Argued fact} \\
        &= \floor{\frac{\tau + (k-1)\cdot \wh{\bell}_{m}([n]\setminus [k])}{k}}
        &&a + \floor{b} = \floor{a+b} \text{ for }a\in\Z_{\ge 0} \\
        &\ge \floor{\frac{\tau + (k-1)\cdot \bell_{m}'([n]\setminus [k])}{k}}
        &&\text{Inductive assumption} \\
        &= \bell'_{\tau}([n]\setminus [k]) + \floor{\frac{\tau - \bell_m'([n]\setminus [k])}{k}} \\
        &= \bell_\tau'([n]\setminus [k]) + \floor{\frac{\bell'_{\tau}([k])}{k}} \\
        &\ge \bell_{\tau}'([n]\setminus [k]) + \bell_{\tau}'([n]\setminus [k]) \\
        &= \bell_{m}'([n]\setminus [k-1])
        \tag*{\qedhere}
    \end{align*}}
    The inductive assumption is exactly the condition for majorization specified in \Cref{def:intro:maj} by the assumption that $\wh{\bell}$ and $\bell'$ are sorted, so we have $\wh{\bell}_m \preceq \bell'_m$.
\end{proof}

\subsection{Majorization Minimax Optimality}
With \Cref{lem:mmo:nested_worst,lem:mmo:vs_alt}, we can conclude that \bricklay is \mmo.
\blmmo*
\begin{proof}
    The claim is the result of composing \Cref{lem:mmo:nested_worst} and \Cref{lem:mmo:vs_alt}.
    Fix an allocation strategy $\calA'$ and an arbitrary adversarial strategy $\calP$.
    Define nested strategy $\wh{\calP}$ to be the output of \Cref{alg:mmo:nest} on $\calP$.
    The nested strategy $\wh{\calP}$ has a chain of neighborhoods $N_1\supseteq\dots\supseteq N_m$.
    Take $\calP'$ to be the nested response strategy on $\calA'$ seeded by $z_t \define |N_t|$.
    Notice that $\wh{\calP}$ is the nested response strategy on $\bl$ seeded by $(z_t)_t$, so the guarantees of \Cref{lem:mmo:vs_alt} hold.  
    \Cref{lem:mmo:nested_worst,lem:mmo:vs_alt} and transitivity of majorization yield the claim:
    \begin{align*}
        &\bell(\bl, \calP) \sim \bell(\bl, \wh{\calP}) \preceq \bell(\calA', \calP') \\
        &\opt(\bl, \calP) \succeq \opt(\bl, \wh{\calP}) \sim \opt(\calA', \calP')
        \qedhere
    \end{align*}
\end{proof}

% %As informally illustrated in \Cref{subsec:intro:mmo}, \mmo implies minimax optimal regret.
% We formalize these claims for the \bricklay strategy.
% Also, as a consequence of \Cref{cor:intro:bl_opt_regret}, \bricklay also achieves minimax optimal competitive ratios.
\bloptregret*
\begin{proof}
    We prove the corollary for Schur-concave $f$. Similar logic shows the claim holds for Schur-convex $g$.
    For an alternative strategy $\calA'$ and adversarial strategy $\calP\in\calP_{n,m}$, \Cref{thm:intro:mmo} guarantees there is another strategy $\calP'\in{n,m}$ satisfying:
    \begin{align*}
        c_{f,\alpha}(\bl, \calP)
        &= \alpha \cdot f(\opt(\bl, \calP)) - f(\bell(\bl, \calP)) \\
        &\le \alpha\cdot f(\opt(\calA', \calP')) - f(\bell(\calA', \calP'))
        &\text{\Cref{thm:intro:mmo} and Schur-concavity} \\
        &= c_{f,\alpha}(\calA', \calP')
    \end{align*}
    Since all $\calP$ admit $\calP'$ satisfying the above, we have $\reg_{f,\alpha}^{n,m}(\bl) \le \reg_{f,\alpha}^{n,m}(\calA')$
\end{proof}
\newpage

\bibliographystyle{alpha}
\bibliography{references}

\appendix

\section{Majorization and Equity}\label{app:sec:maj}
Majorization is a fundamental mathematical tool to analyze equity.
In addition to its canonical definition, majorization has many natural equivalent conditions.
We use $\pp{\cdot} = \max(0, \cdot)$ to denote the positive parts function.

\begin{fact}[Majorization Equivalences]
    Majorization $\mbf{x} \preceq \mbf{y}$ for $\mbf{x},\mbf{y}\in \Z_{\ge 0}^n$ is equivalent to each of the following:
    \begin{itemize}
        % \item Increasing partial sums: $\mbf{x}^{\downarrow}([k]) \ge \mbf{y}^{\downarrow}([k]), \forall k\in[n]$
        \item Decreasing partial sums: $\mbf{x}^{\uparrow}([k]) \le \mbf{y}^{\uparrow}([k]), \forall k\in [n]$
        \item Thresholding inequality\footnote{Assuming $\mbf{x}([n]) = \mbf{y}([n])$, the thresholding inequality is equivalent to Karamata's inequality, which requires $\sum_{i\in[n]} f(\mbf{x}(i)) \ge \sum_{i\in[n]} f(\mbf{y}(i))$ for all convex $f:\Z_{\ge 0}^n \to \R$.}: $\sum_{i\in[n]}\pp{\mbf{x}(i) - \gamma} \ge \sum_{i\in [n]}\pp{\mbf{y}(i) - \gamma}, \forall \gamma \in \Z_{\ge 0}$
        \item Hardy, Littlewood, and Polya \cite{hardy1988inequalities}: $\mbf{y} = W\mbf{x}$ for some doubly stochastic $W\in\R^{n\times n}_{\ge 0}$.
    \end{itemize}
\end{fact}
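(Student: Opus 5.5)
Throughout I assume $\mbf{x}([n]) = \mbf{y}([n])$, as in \Cref{def:intro:maj}. Under the bullets, $\mbf{y}$ is the more equitable vector, i.e.\ $\mbf{x}$ majorizes $\mbf{y}$ in the sense of \Cref{def:intro:maj}. The plan is to take the first bullet (decreasing partial sums) as the reference condition, since it is exactly the defining inequality of \Cref{def:intro:maj} with $\mbf{x}$ in the majorizing role. I then show that each of the other two bullets is equivalent to it. Because the totals agree, the first bullet can equivalently be rewritten as $\mbf{x}^{\downarrow}([k]) \ge \mbf{y}^{\downarrow}([k])$ for all $k$: subtract each side from the common total. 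I will use whichever form is convenient.

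\textbf{Partial sums $\Leftrightarrow$ thresholding.} The key identity is
\[
\sum_{i\in[n]}\pp{\mbf{x}(i)-\gamma} = \max_{k\in\{0,\dots,n\}}\left(\mbf{x}^{\downarrow}([k]) - k\gamma\right).
\]
The maximum is attained by taking $k$ to be the number of entries exceeding $\gamma$.

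For the forward direction, assume $\mbf{x}^{\downarrow}([k]) \ge \mbf{y}^{\downarrow}([k])$ for every $k$. Then
\[
\mbf{x}^{\downarrow}([k]) - k\gamma \ge \mbf{y}^{\downarrow}([k]) - k\gamma \quad\text{for every } k,
\]
so maximizing over $k$ on both sides gives the thresholding inequality for every $\gamma$.

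For the converse, fix $k$ and choose $\gamma = \mbf{y}^{\downarrow}(k)$, which is an integer. With this choice $\sum_i \pp{\mbf{y}(i)-\gamma} = \mbf{y}^{\downarrow}([k]) - k\gamma$ exactly. The identity also gives $\sum_i\pp{\mbf{x}(i)-\gamma} \ge$ the sum of $\mbf{x}(i)-\gamma$ over the top $k$ entries, which is $\mbf{x}^{\downarrow}([k]) - k\gamma$. Thresholding says the left side dominates $\mbf{y}^{\downarrow}([k]) - k\gamma$. That inequality alone only bounds $\mbf{x}$'s thresholded sum, not $\mbf{x}^{\downarrow}([k])$ directly. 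So I would instead use piecewise linearity: both sides are convex piecewise-linear functions of $\gamma$, with breakpoints at the entries. The top-$k$ sum $\mbf{x}^{\downarrow}([k])$ equals $\min_\gamma\big(k\gamma + \sum_i\pp{\mbf{x}(i)-\gamma}\big)$, a standard LP duality fact in which the minimum is attained at an integer $\gamma$. The same formula holds for $\mbf{y}$. Taking the minimum over $\gamma$ of the pointwise inequality yields $\mbf{x}^{\downarrow}([k]) \ge \mbf{y}^{\downarrow}([k])$.

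\textbf{Doubly stochastic $\Rightarrow$ partial sums.} Suppose $\mbf{y} = W\mbf{x}$ with $W$ doubly stochastic. For any set $S$ with $|S| = k$,
\[
\mbf{y}(S) = \sum_j c_j\,\mbf{x}(j), \qquad c_j \define \sum_{i\in S}W_{ij}.
\]
Here $c_j\in[0,1]$ and $\sum_j c_j = k$. A linear function under these constraints is minimized by putting weight $1$ on the $k$ smallest entries of $\mbf{x}$. Hence $\mbf{y}(S) \ge \mbf{x}^{\uparrow}([k])$, and minimizing over $S$ gives $\mbf{y}^{\uparrow}([k]) \ge \mbf{x}^{\uparrow}([k])$. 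Totals are preserved because the column sums of $W$ are $1$.

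\textbf{Partial sums $\Rightarrow$ doubly stochastic (main obstacle).} This is the only step that requires a construction. I would induct on $D \define \sum_i|\mbf{x}^{\downarrow}(i)-\mbf{y}^{\downarrow}(i)|$, using Robin Hood transfers, which is the integer setting. Sort both vectors decreasingly; permutation matrices are doubly stochastic, so this is harmless.
\begin{itemize}
\item If $\mbf{x}\ne\mbf{y}$, let $a$ be the first index with $\mbf{x}(a) > \mbf{y}(a)$. Let $b > a$ be the first index with $\mbf{x}(b) < \mbf{y}(b)$; it exists by the partial-sum dominance and equal totals.
\item Move one unit from coordinate $a$ to coordinate $b$ to get $\mbf{x}'$. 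This is the T-transform $\mbf{x}' = T\mbf{x}$ with
\[
T = (1-\lambda)I + \lambda\,\Pi_{ab}, \qquad \lambda = \frac{1}{\mbf{x}(a)-\mbf{x}(b)},
\]
which is well defined since $\mbf{x}(a) > \mbf{x}(b)$ when $\mbf{x}\ne\mbf{y}$.
\item Check the three invariants: $\mbf{x}'$ still dominates $\mbf{y}$ in every top-$k$ partial sum, the sorted order can be restored by a permutation, and $D$ strictly drops.
\end{itemize}
Iterating, $\mbf{y}$ is obtained from $\mbf{x}$ by a product of T-transforms and permutations. A product of doubly stochastic matrices is doubly stochastic, which gives the required $W$.

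The delicate point is verifying that the first invariant survives the transfer. This uses the choice of $a$ and $b$: for $a\le k<b$, we have $\mbf{x}^{\downarrow}([k]) \ge \mbf{y}^{\downarrow}([k]) + 1$ strictly, so losing one unit there keeps the inequality.
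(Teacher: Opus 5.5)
The paper gives no proof of this Fact. It is quoted from the classical majorization literature (Hardy--Littlewood--P\'olya, Marshall--Olkin--Arnold), so your proposal is an independent, self-contained route, and it is correct.

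Your two interpretive choices are the right ones and are in fact forced. Under \Cref{def:intro:maj} the three bullets characterize $\mbf{x}\succeq\mbf{y}$, not $\mbf{x}\preceq\mbf{y}$ as the lead-in says: $\mbf{x}=(2,0)$, $\mbf{y}=(1,1)$ satisfy all three bullets, yet $\mbf{x}\npreceq\mbf{y}$. Your reading is also the one the paper relies on when it applies the thresholding criterion in the proof of \Cref{lem:mmo:maj_min_bij}. Likewise, equal totals cannot be dropped: $\mbf{x}=\vec{0}$, $\mbf{y}=(1,\dots,1)$ satisfies the first bullet but not the second.

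The standard real-valued proofs use thresholds at all real $\gamma$ and T-transforms of arbitrary size. Your version instead exploits integrality. In the thresholding-to-partial-sums direction, the hypothesis is needed only at the single integer point $\gamma=\mbf{x}^{\downarrow}(k)$, where $\mbf{x}^{\downarrow}([k])=k\gamma+\sum_i\pp{\mbf{x}(i)-\gamma}\ge k\gamma+\sum_i\pp{\mbf{y}(i)-\gamma}\ge\mbf{y}^{\downarrow}([k])$. The last step just drops all but the top $k$ terms, so no LP duality is needed. Your first attempt at $\gamma=\mbf{y}^{\downarrow}(k)$ was simply the wrong evaluation point.

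In the HLP direction you obtain more than the bare existence of $W$. You show that $\mbf{y}$ is reached from $\mbf{x}$ by integral unit Robin Hood transfers and permutations. That is exactly the Dalton--Pigou principle invoked in the introduction, and the same kind of move as the exchange $\mbf{x}-\chi_i+\chi_{i^*}$ in the proof of \Cref{lem:mmo:maj_min_bij}. The citation, in turn, buys brevity and real-valued generality.

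Three small things to tighten:
\begin{itemize}
\item After the transfer, $\mbf{x}'$ need not be sorted: $\mbf{x}=(3,3,0)$, $\mbf{y}=(2,2,2)$ gives $\mbf{x}'=(2,3,1)$. So ``$D$ strictly drops'' does not follow from the unsorted identity $\sum_i|\mbf{x}'(i)-\mbf{y}(i)|=D-2$ alone. You also need the rearrangement fact that sorting $\mbf{x}'$ against the sorted $\mbf{y}$ cannot increase $\ell_1$ distance.
\item A cleaner potential is $\sum_i\mbf{x}(i)^2$. It is permutation invariant and drops by $2(\mbf{x}(a)-\mbf{x}(b))-2\ge 2$, because $\mbf{x}(a)>\mbf{y}(a)\ge\mbf{y}(b)>\mbf{x}(b)$ forces $\mbf{x}(a)-\mbf{x}(b)\ge 2$ for integers.
\item Your plan does not address the footnote's Karamata equivalence. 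It follows from the thresholding bullet: every convex $f$ on $\{0,\dots,M\}$ equals $f(0)+(f(1)-f(0))t+\sum_{\gamma=1}^{M-1}c_\gamma\pp{t-\gamma}$ with $c_\gamma=f(\gamma+1)-2f(\gamma)+f(\gamma-1)\ge 0$, and the affine part cancels under equal totals.
\end{itemize}
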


There is a plethora of Schur-concave and Schur-convex functions.
\cite{marshall2011majorization} contains an extensive list of functions in these classes.
Below, we provide a sample of these functions, which take as input a load vector $\bell\in\Z_{\ge 0}^n$, use $\mu = \frac{1}{n}\sum_{i\in[n]} \bell(i)$, and may have parameter $b\ge 0$:

\medskip
\noindent
\begin{minipage}[t]{0.48\textwidth}
    \textbf{Schur-concave Functions}
    \begin{itemize}[leftmargin=*]
        \item $b$-Matching: $\sum_{i\in[n]}\min(b, \bell(i))$
        \item $b$-Subsidized NSW: $\prod_{i\in [n]}(\bell(i) + b)^{1/n}$
        \item Egalitarian Welfare: $\min_{i\in[n]}{\bell(i)}$
        \item Power Means: $\left(\sum_{i\in[n]}|\bell(i)|^p\right)^{1/p}$ for $p < 1$
    \end{itemize}
\end{minipage}
\hfill
\begin{minipage}[t]{0.48\textwidth}
    \textbf{Schur-convex Functions}
    \begin{itemize}[leftmargin=*]
        \item Gini Index: $\frac{1}{\left(2n^2 \mu\right)}\left( \sum_{i,j\in[n]} |\bell(i) - \bell(j)| \right)$
        \item Latency: $\frac{1}{2}\sum_{i\in[n]}\left(\bell(i) + 1\right)\bell(i)$
        \item Makespan: $\max_{i\in[n]}{\bell[i]}$
        \item $\ell^p$-norms: $\left(\sum_{i\in[n]}|\bell(i)|^p\right)^{1/p}$ for $p \ge 1$
    \end{itemize}
\end{minipage}

\section{Combinatorial Constraints}\label{app:sec:combo_constraint}
\subsection{Polymatroids}
Polymatroids generalize matroids to multisets \cite{herzog2002discrete,fujishige2005submodular}.
Like matroids, polymatroids possess several axiomatic properties that make them conducive to greedy optimization.
Given vectors $\mbf{x}, \mbf{y}$, we use $\mbf{x} \le \mbf{y}$ when $\mbf{x}(i) \le \mbf{y}(i),\forall i$.

\begin{restatable}[Polymatroid Axioms \cite{herzog2002discrete}]{fact}{polyax}\label{fact:back:poly_ax}
    A discrete polymatroid $P$ satisfies each of the following:
    \begin{itemize}
        \item Zero-vector membership: $\vec{0} \in P$
        \item Downward closure: if $\mbf{x} \in P$ and $\mbf{y} \le \mbf{x}$ for $\mbf{y}\in\Z_{\ge 0}^n$ then $\mbf{y}\in P$
        \item Exchange property: if $\mbf{x}, \mbf{y}\in P$ with $\mbf{x}([n]) < \mbf{y}([n])$, then there is $i\in[n]$ satisfying $\mbf{y}(i) > \mbf{x}(i)$ and $\mbf{x} + \chi_i \in P$.
    \end{itemize}
\end{restatable}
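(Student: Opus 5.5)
This is Fact (Polymatroid Axioms): a discrete polymatroid $P$ contains $\vec{0}$, is downward closed, and satisfies the exchange property. The plan is to verify each item directly from the rank-function definition $P=\{\mbf{x}\in\Z_{\ge 0}^n \mid \mbf{x}(A)\le r(A)\ \forall A\}$, using normalization, monotonicity and submodularity of $r$.

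\textbf{Zero-vector membership.} Since $\vec{0}(A)=0\le r(A)$, this needs only $r\ge 0$, which holds because $r$ maps into $\Z_{\ge 0}$.

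\textbf{Downward closure.} If $\vec{0}\le\mbf{y}\le\mbf{x}$ and $\mbf{x}\in P$, then for every $A$ we have $\mbf{y}(A)\le\mbf{x}(A)\le r(A)$, so $\mbf{y}\in P$.

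\textbf{Exchange property.} This is the substantive step. I would argue by contradiction.
\begin{itemize}
    \item Let $D=\{i \mid \mbf{y}(i)>\mbf{x}(i)\}$, and suppose that for every $i\in D$ the vector $\mbf{x}+\chi_i$ is not in $P$.
    \item Then for each $i\in D$ some constraint is violated, so there is a set $A_i\ni i$ with $\mbf{x}(A_i)=r(A_i)$. Call such a set \emph{tight} for $\mbf{x}$.
    \item Tight sets are closed under union. If $A,B$ are tight, then
    \[
    r(A\cup B)+r(A\cap B)\le r(A)+r(B)=\mbf{x}(A)+\mbf{x}(B)=\mbf{x}(A\cup B)+\mbf{x}(A\cap B).
    \]
    Each term on the left bounds the matching term on the right from above, because $\mbf{x}\in P$. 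So equality holds throughout, and both $A\cup B$ and $A\cap B$ are tight.
    \item Let $T=\bigcup_{i\in D}A_i$. Then $T$ is tight and $D\subseteq T$. If $D$ is empty, take $T=\emptyset$, which is tight since $r(\emptyset)=0$.
    \item Outside $T$ we have $\mbf{y}\le\mbf{x}$ coordinatewise, so $\mbf{y}([n]\setminus T)\le\mbf{x}([n]\setminus T)$. Also $\mbf{y}(T)\le r(T)=\mbf{x}(T)$.
    \item Adding the two bounds gives $\mbf{y}([n])\le\mbf{x}([n])$, which contradicts $\mbf{x}([n])<\mbf{y}([n])$.
\end{itemize}

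The main obstacle is the union-closure of tight sets, since it is the only place submodularity enters. The rest is bookkeeping. Monotonicity of $r$ is not needed for these three axioms.
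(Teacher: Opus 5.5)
Your proof is correct. The paper does not actually prove this fact; it cites Herzog and Hibi. There, downward closure plus exchange is the \emph{definition} of a discrete polymatroid, and the match with the rank-function description used here comes from their identification of discrete polymatroids with the lattice points of integral polymatroids.

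You instead verify directly the one direction the paper needs (rank function $\Rightarrow$ axioms), which is also exactly what the auxiliary-matroid lemma later invokes. Your route is the standard uncrossing argument. Submodularity makes the tight sets of $\mbf{x}$ closed under union and intersection, so the union $T$ of the sets blocking each $i\in D$ is tight. Then $\mbf{y}([n])=\mbf{y}(T)+\mbf{y}([n]\setminus T)\le r(T)+\mbf{x}([n]\setminus T)=\mbf{x}([n])$ gives the contradiction.

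The citation buys more than the paper uses: the converse, and uniqueness of the normalized monotone submodular rank function. Your proof is short and self-contained, and it shows exactly which hypotheses matter; you are right that monotonicity is not needed.

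One hypothesis you use silently should be stated. From $\mbf{x}+\chi_i\notin P$, a violated constraint only gives $\mbf{x}(A_i)\le r(A_i)<\mbf{x}(A_i)+1$ for some $A_i\ni i$. It is the integrality of $r$ that forces equality, i.e.\ tightness. This is not cosmetic: for real-valued submodular $r$, the lattice points of $P$ can fail the exchange property.
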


\subsection{Convex Games}
A convex game is a special case of a cooperative game among $n$ agents.
These games admit a non-empty core -- outcomes that cannot be improved upon by any coalition of players -- that can be characterized by a simple greedy algorithm \cite{shapley1971cores}.
Discrete convex games are characterized by a characteristic function $v:2^{[n]}\to \Z_{\ge 0}$ that maps a coalition of players to the value they can collectively generate.
The function must be normalized $v(\emptyset) = 0$ and supermodular $v(A) + v(B) \le v(A\cup B) + v(A\cap B)$ for all $A,B\subseteq [n]$.
The core of the game is the set of payoff vectors $\mbf{x}\in \Z_{\ge 0}$ that are efficient $\mbf{x}([n]) = v([n])$ and incentivize every coalition to participate in the grand coalition $\mbf{x}(A) \ge v(A)$ for all coalitions $A\subseteq [n]$.

Our results extend to the problem of computing equitable cores in convex games, as they are intimately related to bases of polymatroids.
Specifically, a vector is a basis of the polymatroid with rank function $r([n] \setminus A) = v([n]) - v(A)$ if and only if the same vector is a core of the convex game with characteristic function $v$ \cite{fujishige2005submodular}.

\section{Proof of Brick-Laying Properties}\label{app:sec:bl}
\blmajmin*

\begin{proof}
    This proof has three components.
    We first argue efficiency.
    We have value oracle access to $r_t$ by assumption and can thus implement a value oracle for $\wt{r}_t = r_t + \bell_{t-1}$ in polynomial time.
    A value oracle of $\wt{r}_t$, defining polymatroid $\wt{P}_t$, can be used to efficiently implement an independence oracle for $\wt{P}_t$ that decides $\mbf{x}\in\wt{P}_t$.
    We have $\mbf{x}\in\wt{P}_t$ if and only if $\mbf{x} \ge 0$ and $\min_{A} (\wt{r}_t(A) - \mbf{x}(A)) \ge 0$.
    The optimization problem can be solved in polynomial time with value oracle access to $\wt{r}_t$ \cite{schrijver2000submodular,iwata2001combinatorial}.
    We can efficiently identify  $i_k$ by enumerating agents and applying the independence oracle.
    All other operations require polynomial time, thus \Cref{alg:back:bl} is efficient.

    Let $\wt{\mbf{x}}$ be the value of $\mbf{x}^{(k)}$ after the while loop in \Cref{alg:back:bl} terminates.
    First, we show that a basis $\wt{\mbf{x}}\in\wt{\calP}_t$ takes the form $\mbf{x}_t + \bell_{t-1}$ for some basis $\mbf{x}_t \in \calP_t$.
    If $\wt{\mbf{x}} \ge \bell_{t-1}$, then $\wt{\mbf{x}} - \bell_{t-1} \in \base(P_t)$ and the claim is true.
    Now, for contradiction, assume that $\wt{\mbf{x}}(i) < \bell_{t-1}(i)$ on some $i\in [n]$.
    Then we have
    \begin{align*}
        \wt{\mbf{x}}([n])
        &= \wt{\mbf{x}}([n] - i) + \wt{\mbf{x}}(i) \\
        &< \wt{r}_t([n] - i) + \bell_{t-1}(i)
        &\text{Polymatroid constraint and contradictory assumption} \\
        &\le r_t([n]) + \bell_{t-1}([n])
        &\text{Monotonicity and }\wt{r}_t = r_t + \bell_{t-1} \\
        &= \wt{r}_t([n])
    \end{align*}
    A contradiction arises as $\wt{\mbf{x}}$ is not a basis of $\wt{\calP}$.
    This completes the claim.

    Lastly, we prove $\mbf{x}_t + \bell_{t-1}$ is majorization minimal.
    It suffices to show $\wt{\mbf{x}}$ is a majorization minimal basis of $\wt{\calP}$.
    \cite{federgruen1986greedy} provides a greedy algorithm to find a polymatroid basis that maximizes functions of the form $v(\mbf{x}) = \sum_{i\in [n]} v'(\mbf{x}(i))$ for concave and non-negative $v'$.
    If we can simultaneously maximize objectives of the form $v'(\cdot) = \min(\gamma, \cdot)$ for different thresholds $\gamma\in\Z_{\ge 0}$, we would find a majorization minimal basis (maximization of all such $v'$ is equivalent to threshold characterization of majorization).
    \Cref{alg:back:bl} is a special case of the algorithm by \cite{federgruen1986greedy} that is consistent with maximizing $v'$ at every threshold $\gamma$ simultaneously, thus it finds a majorization minimal basis.
\end{proof}

% \begin{proof}
%     Let $\wt{\mbf{x}}\in\wt{P}$ be an arbitrary basis.
%     If $\wt{\mbf{x}} \ge \bell$, then $\wt{\mbf{x}} - \bell \in \base(P)$ and the claim is true.
%     Now, for contradiction, assume that $\wt{\mbf{x}}(i) < \bell(i)$ on some $i\in [n]$.
%     Then we have
%     \begin{align*}
%         \wt{\mbf{x}}([n])
%         &= \wt{\mbf{x}}([n] - i) + \wt{\mbf{x}}(i) \\
%         &< \left(r([n] - i) + \bell([n] - i)\right) + \bell(i)
%         &\text{Polymatroid constraint} \\
%         &\le r([n]) + \bell([n])
%         &\text{Monotonicity} \\
%         &= \wt{r}([n])
%     \end{align*}
%     A contradiction arises as $\wt{\mbf{x}}$ is not a base.
%     This completes the claim.
% \end{proof}

\hindalloc*
\begin{proof}
    Let $r_1,\dots,r_h$ be the rank functions of the realized polymatroids $P_1,\dots, P_h$.
    Define polymatroid $Q$ using the rank function $\sum_{t\in[h]} r_h$, which is valid since submodular functions are closed under addition \cite{schrijver2002combinatorial,fujishige2005submodular}.
    Notice that $H = \base(Q)$, thus \Cref{alg:back:bl} computes a majorization minimal basis $\bell^*\in H$.
    The hindsight load vector is $\opt(\calA, \calP) \define \bell^*$, which is randomized according to the realization of $P_1,\dots, P_h$.
\end{proof}

\section{Brick-Laying is Majorization Minimal-Optimal Deferred Proofs}\label{app:sec:bl_mmo}
In this section, we develop polymatroid machinery to support our arguments in \Cref{lem:mmo:nested_worst}.
The final goal of this section is to show \Cref{lem:mmo:maj_min_bij}, but we require a variant of an exchange argument before we can complete our claims.
We use auxiliary matroids to prove this exchange property.

\begin{lemma}[Auxiliary Matroid]
    Fix polymatroid $P\subseteq \Z_{\ge 0}^n$ with rank $r\define \rank(P)$ and define surjection $\zeta: 2^{[n]\times[r]} \to \Z_{\ge 0}^n$ which gives $\mbf{x} \define \zeta(S)$ with $\mbf{x}(i) = \left|\{(i',j')\in S\mid i' = i\}\right|$.
    The set system $\calM_P = ([n]\times [r], \calI)$ where $S\in\calI$ is independent if and only if $\zeta(S)\in P$ is a matroid, is called the auxiliary matroid of $P$.
    Further, a set is a base $A\in \base(\calM_P)$ if and only if $\zeta(A) \in \base(P)$.
\end{lemma}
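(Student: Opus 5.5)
The plan is to verify the three matroid axioms for $\calI$ directly from the polymatroid axioms in \Cref{fact:back:poly_ax}. The key observation is that $\zeta$ is monotone and additive on disjoint sets: if $S\subseteq T$ then $\zeta(S)\le\zeta(T)$, and $|S| = \zeta(S)([n])$. We will use this to transport each axiom.

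\textbf{Nonemptiness and hereditary property.} Since $\zeta(\emptyset)=\vec{0}\in P$ by zero-vector membership, we get $\emptyset\in\calI$. If $S\subseteq T$ and $T\in\calI$, then $\zeta(S)\le\zeta(T)\in P$. Downward closure then gives $\zeta(S)\in P$, so $S\in\calI$.

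\textbf{Augmentation.} Take $S,T\in\calI$ with $|S|<|T|$. Put $\mbf{x}=\zeta(S)$ and $\mbf{y}=\zeta(T)$, so $\mbf{x}([n])=|S|<|T|=\mbf{y}([n])$. The exchange property of $P$ yields an $i$ with $\mbf{y}(i)>\mbf{x}(i)$ and $\mbf{x}+\chi_i\in P$. Because $T$ contains more elements with first coordinate $i$ than $S$ does, some $(i,j)\in T\setminus S$ exists. Then $\zeta(S\cup\{(i,j)\})=\mbf{x}+\chi_i\in P$, so $S\cup\{(i,j)\}\in\calI$.

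\textbf{Ground set is large enough.} We also need every vector of $P$ to be representable, i.e.\ $\zeta$ restricted to $\calI$ should hit all of $P$. Every $\mbf{x}\in P$ satisfies $\mbf{x}(i)\le r(\{i\})\le r([n])=r$, using the rank constraint and monotonicity. So we can choose $\mbf{x}(i)$ distinct elements $(i,1),\dots,(i,\mbf{x}(i))$ from $\{i\}\times[r]$, giving a preimage in $[n]\times[r]$. This also justifies calling $\zeta$ a surjection onto $P$.

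\textbf{Bases.} Every independent set $S$ has $|S|=\zeta(S)([n])\le r([n])=r$. Together with the representability above, this shows the matroid rank of $\calM_P$ equals $r$. Matroid bases are exactly the independent sets of size $r$. Hence $A\in\base(\calM_P)$ if and only if $\zeta(A)\in P$ and $\zeta(A)([n])=r$, which is the definition of $\zeta(A)\in\base(P)$.

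\textbf{Main obstacle.} The only delicate step is the augmentation argument. There one must check that the coordinate $i$ supplied by the polymatroid exchange property actually corresponds to an element of $T\setminus S$. The counting inequality $\mbf{y}(i)>\mbf{x}(i)$ handles this, so I expect no real difficulty.
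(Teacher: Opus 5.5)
Your proof is correct and follows essentially the same route as the paper. Both verify the empty-set, hereditary, and augmentation axioms directly from the polymatroid axioms, use the count $\mbf{y}(i)>\mbf{x}(i)$ to find $(i,j)\in T\setminus S$, and then characterize bases by cardinality; your extra check that $\mbf{x}(i)\le r(\{i\})\le r$, together with the standard fact that $\base(P)$ is nonempty (e.g.\ by the greedy construction), usefully makes explicit why the matroid rank is exactly $r$, which the paper's proof only asserts.
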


\begin{proof}
    The empty-set is independent in matroid $\calM_P$ as $\zeta(\emptyset) = \vec{0} \in P$.
    To show downward closure, take $A \subseteq B$ with independent $B\in\calI$.
    By construction $\zeta(A) \le \zeta(B)$ with $\zeta(B)\in P$;
    these facts imply $\zeta(A)\in P$ which gives downward closure: $A\in\calI$.

    We take independent sets $A,B \in\calI$ with $|A| < |B|$ to show the exchange property.
    Define $\mbf{x}_A = \zeta(A)$ and $\mbf{x}_B = \zeta(B)$, which satisfy $\mbf{x}_A([n]) < \mbf{x}_B([n])$.
    By the exchange property of polymatroids, there is an $i$ with $\mbf{x}_A(i) < \mbf{x}_B(i)$ for which $\mbf{x}_A + \chi_i \in P$.
    With the definition $T_i = \{(i', j')\mid i' = i\}$, notice that $T_i\cap (B\setminus A)$ is non-empty.
    Select any $(i,j)\in T_i\cap(B\setminus A)$ which will have $\mbf{x}_A + \chi_i = \zeta(A + (i,j))$.
    This shows that we can exchange $(i,j)\in B\setminus A$ into an independent set $A$ as desired.

    Finally, since $|A| = \mbf{x}_A([n])$ for $\mbf{x}_A \define \zeta(A)$ we have $|A| > r$ implies $A\notin \calM_P$ and $|A| = r$ for any $A$ in the $\zeta$ preimage of some $\mbf{x}\in P$.
    Thus, $A\in \base(\calM_P)$ if and only if $\zeta(A)\in\base(P)$.
\end{proof}

\begin{restatable}[Inclusion/Exclusion Exchange]{lemma}{exchange}\label{lem:mmo:exchange}
    Let $P$ be a polymatroid with bases $\mbf{x},\mbf{x}^*\in P$.
    If $\mbf{x}(A) > \mbf{x}^*(B)$ for $A\subseteq B$, then there is $i\in A$ and $i^*\in B^c$ such that $\mbf{x} - \chi_i + \chi_{i^*} \in\base(P)$.
\end{restatable}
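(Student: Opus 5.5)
The plan is to work in the auxiliary matroid $\calM_P$ from the preceding lemma and use two applications of the matroid augmentation axiom. We use the ordinary augmentation (exchange) property of $\calM_P$, together with the fact that $\zeta$ maps bases of $\calM_P$ exactly onto $\base(P)$.

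\textbf{Lifting.} Choose sets $X, X^*\in\base(\calM_P)$ with $\zeta(X)=\mbf{x}$ and $\zeta(X^*)=\mbf{x}^*$. This is possible because $\zeta$ is surjective: for each agent $i$ take the pairs $(i,1),\dots,(i,\mbf{x}(i))$, and similarly for $\mbf{x}^*$. Both sets have size $r=\rank(P)$. For $C\subseteq[n]$ write $T_C=\{(i',j')\mid i'\in C\}$. Define
\[
S \define X\setminus T_A, \qquad W\define X^*\cap T_{B^c}.
\]
Both sets are independent, since they are subsets of bases. Their sizes are $|S| = r-\mbf{x}(A)$ and $|W| = \mbf{x}^*(B^c) = r-\mbf{x}^*(B)$. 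The hypothesis $\mbf{x}(A)>\mbf{x}^*(B)$ therefore gives $|W|>|S|$.

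\textbf{First augmentation.} Since $|W|>|S|$, augmentation yields $w=(i^*,j)\in W\setminus S$ with $S+w\in\calI$. Because $w\in W$, we have $i^*\in B^c$, which is exactly the required location of $i^*$. Since $w\notin S$, $\zeta(S+w)=\zeta(S)+\chi_{i^*}$.

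\textbf{Second augmentation.} Extend $S+w$ to a base by repeatedly augmenting from $X$. Every added element lies in $X\setminus(S+w)\subseteq T_A$. The target size is $r$ and $|S+w| = r-\mbf{x}(A)+1$, so exactly $\mbf{x}(A)-1$ elements are added. These come from the $\mbf{x}(A)$ elements of $T_A\cap X$. The resulting base is therefore $X - a + w$ for a single omitted $a=(i,j')\in T_A\cap X$, so $i\in A$. Applying $\zeta$ and the base correspondence of the auxiliary matroid gives
\[
\zeta(X-a+w)=\mbf{x}-\chi_i+\chi_{i^*}\in\base(P),
\]
with $i\in A$ and $i^*\in B^c$, as claimed.

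\textbf{Main obstacle.} The delicate point is the bookkeeping of labels in $[n]\times[r]$. The lifts of $\mbf{x}$ and $\mbf{x}^*$ may share labels, so we must make sure the augmenting element $w$ genuinely increases agent $i^*$'s count. Taking $w\in W\setminus S$ directly from the augmentation axiom handles this. The remaining check is that the second augmentation adds only elements of $T_A$: this holds because $X\setminus S=T_A\cap X$ and $w$ may itself lie in $X$ without affecting the count. Note also that the hypothesis forces $\mbf{x}(A)\ge 1$, so $T_A\cap X$ is nonempty and the omitted element $a$ exists.
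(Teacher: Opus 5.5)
Your proof is correct, and it takes a genuinely different route from the paper. The paper lifts $\mbf{x},\mbf{x}^*$ to bases $S,S^*$ of $\calM_P$. It then invokes the bijective basis-exchange theorem (Schrijver, Cor.~39.12a) to obtain a bijection $\psi: S\setminus S^*\to S^*\setminus S$ with every $S-e+\psi(e)$ a base. A count gives $|(S\setminus S^*)\cap E_A| > |(S^*\setminus S)\cap E_B|$, so by pigeonhole some element over $A$ is matched to an element outside $E_B$.

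You instead use only the ordinary augmentation axiom, twice. The hypothesis becomes directly the size gap $|X^*\cap T_{B^c}| - |X\setminus T_A| = \mbf{x}(A)-\mbf{x}^*(B) > 0$, which forces the incoming element to sit over $B^c$. Re-extending from $X$ can then only draw on $X\cap T_A$, so exactly one element over $A$ is dropped.

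Your route is more elementary and self-contained. It also runs verbatim inside $P$ using the polymatroid exchange axiom and downward closure: zero out $\mbf{x}$ on $A$ and $\mbf{x}^*$ on $B$, augment once, then re-augment from $\mbf{x}$. So the auxiliary matroid is not even needed. The paper's route depends on a nontrivial external theorem but delivers more structure: a full matching of single exchanges, at least $\mbf{x}(A)-\mbf{x}^*(B)$ of which cross from $E_A$ to the complement of $E_B$.

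One small tightening. Because $A\subseteq B$, the element $w$ lies over $B^c$ and hence outside $X\setminus S = X\cap T_A$. Together with $w\notin S$, this gives $w\notin X$. So the case you hedge about in your last paragraph cannot occur, and $i\neq i^*$ automatically.
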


\begin{proof}
    We first argue that $A$ and $B^c$ are non-empty.
    The set $A$ is nonempty since $\mbf{x}(A) > \mbf{x}^*(B) \ge 0$.
    Further $B^c$ is nonempty since $\mbf{x}(B^c) = \mbf{x}([n]) - \mbf{x}(B) > \mbf{x}([n]) - \mbf{x}(A) \ge 0$.

    We use several definitions to facilitate our proof.
    Let $r\define \rank(P)$ be the rank of $P$.
    Take $E_A = \{(i,j)\in[n]\times [r]\mid i\in A\}$ and define $E_B$ in a similar manner for $B$.
    Let $S$ and $S^*$ be such that $\zeta(S) = \mbf{x}$ and $\zeta(S) = \mbf{x}^*$.
    Notice that $|S\cap E_A| = \mbf{x}(A) > \mbf{x}^*(B) = |S^*\cap E_B|$.
    We now apply a variant of the exchange property specified in \cite[Corollary 39.12a]{schrijver2002combinatorial}.
    This variant of exchange guarantees that there is a bijection $\psi:(S\setminus S^*)\to (S^*\setminus S)$ such that $S - e + \psi(e)\in \calM_P$ for all $e\in S\setminus S^*$.
    We claim that the image of $(S\setminus S^*)\cap E_A$ under this bijection satisfies $\psi((S\setminus S^*)\cap E_A) \supsetneq (S^*\setminus S)\cap E_B$.
    Since images of a bijection preserve cardinality $|\psi(S)| = |S|$ for all $S$, the claim is a result of:
    \begin{align*}
        &|(S^*\setminus S)\cap E_B| \\
        &= |(S^*\setminus S)\cap E_B| + |S\cap S^*\cap E_B| - |S\cap S^*\cap E_B| \\
        &= |(S^* \cap E_B)| - |S\cap S^*\cap E_B| \\
        &< |(S^* \cap E_A)| - |S\cap S^*\cap E_A|
        &&|(S^* \cap E_A)| < |(S^* \cap E_A)| \text{ and }E_A \subseteq E_B \\
        &= |(S\setminus S^*)\cap E_A|
    \end{align*}
    The bijection guarantees the existence of $(i, j)\in (S\setminus S*)\cap E_A$ and $(i^*, j^*) \in (S^*\setminus S)\cap E_B^c$ such that $S - (i,j) + (i^*, j^*)\in\calM_P$.
    The auxiliary matroid gives $\zeta(S - (i,j) + (i^*, j^*)) = \mbf{x} - \chi_i + \chi_{i^*} \in \base(P)$ where $i\in A$ and $i^*\in B^c$ as claimed.
\end{proof}

It remains to show \Cref{lem:mmo:maj_min_bij}.
Ideas from a conversation with Claude (Sonnet 4.6) were used to generate this proof, most notably of which is the use of Hall's theorem.
See \Cref{app:sec:ai} for additional details.

\majminbij*

\begin{proof}
    To prove the claim, we assume no such bijection exists and use Hall's theorem to identify a base $\wt{\mbf{x}}$ creating the contradiction $\mbf{x} + \bell \npreceq \wt{\mbf{x}} + \bell$.    

    Define $L_{k} = \{i\mid \bell(i) + \mbf{x}(i) \ge k\}$ to be the set of agents with large load under allocation $\mbf{x}$; notice that $L_{k} \subseteq L_{k-1}$.
    We now define a bipartite graph.
    The two classes of nodes are $M_{\mbf{x}}^{\bell}$ and $M_{\mbf{x}^*}^{\bell}$.
    Node $(i, l)\in M_{\mbf{x}}^{\bell}$ has an edge to $(i^*, l^*) \in M_{\mbf{x}}^{\bell}$ if and only if $i^*\in L_{l-1}$.
    Notice that a perfect matching on this bipartite graph defines a bijection $\phi$ with the desired properties.
    Since we assume no such bijection exists. Hall's theorem implies the existence of $A\subseteq M_{\mbf{x}}^{\bell}$ with $|A| > |N(A)|$, where $N(A)$ denotes the neighborhood of $A$ on the bipartite graph.
    For an arbitrary $A$, we first compute $|N(A)|$.
    Let $k = \min\{l \mid (i, l)\in A\}$ and notice that there is $(i,k)\in A$ with an edge to all $(i^*, \ell^*)\in M_{\mbf{x}^*}^{\bell}$ with $i^*\in B_{k-1}$.
    Nodes $(\wh{i}, l)\in A$ with $l \ge k$ have a weakly smaller neighborhood.
    Simplifying shows:
    \begin{align*}
        |N(A)| = \left|\left\{ (i^*,l^*)\in  M_{\mbf{x}^*}^{\bell} \mid i^*\in L_{k-1} \right\}\right| = \mbf{x}^*(L_{k-1})
    \end{align*}
    Without loss of generality, we take $A = \{(i,l)\in M_{\mbf{x}}^{\bell} \mid l \ge k \}$, as it does not increase alter $N(A)$.
    We now compute the cardinality of such a set:
    \begin{align*}
        |A|
        &= \left|\left\{(i,l)\in M_{\mbf{x}}^{\bell} \mid l \ge k\right\}\right| \\
        &= \left|\left\{(i,l) \mid \bell(i) + 1 \le l \le \bell(i) + x(i) \text{ and } k< \bell(i) + 1\right\}\right| \\
        &\quad\quad\quad + \left|\left\{(i,l) \mid k \le l \le \bell(i) + x(i) \text{ and } k \ge \bell(i) + 1\right\}\right| \\
        &= \sum_{i\in L_{k}: k\le\bell(i) + 1} \mbf{x}(i) + \sum_{i\in L_{k}: k> \bell(i) + 1} \left(\mbf{x}(i) + \bell(i) - k + 1)\right) \\
        &\le \mbf{x}(L_k)
    \end{align*}
    In summary, by Hall's theorem, we have shown there is a $k$ such that:
    \begin{align*}
        \mbf{x}^*(L_{k-1})
        = |N(A)|
        < |A|
        \le \mbf{x}(L_k)
    \end{align*}
    To construct $\wt{\mbf{x}}$, we apply \Cref{lem:mmo:exchange}.
    The above inequality in conjunction with \Cref{lem:mmo:exchange} implies that there is $i\in L_k$ and $i^*\in L_{k-1}^c$ such that $\wt{\mbf{x}} = \mbf{x} - \chi_{i} + \chi_{i^*} \in \base(P)$.

    We complete the contradiction by showing $\mbf{x} + \bell \npreceq \wt{\mbf{x}} + \bell$ using the thresholding characterization of majorization.
    Using the notation $\mbf{v} \define \mbf{x} + \bell$ and $\wt{\mbf{v}} \define \wt{\mbf{x}} + \bell$ along with threshold $\gamma \define k - 1$ gives:
    \begin{align}\label{eq:mmo:bij_maj}
        &\sum_{i'\in [n]}\pp{\wt{\mbf{v}}(i') - \gamma} \nonumber \\
        &= \pp{\wt{\mbf{v}}(i) - \gamma} + \pp{\wt{\mbf{v}}(i^*) - \gamma} + \sum_{i'\in [n]\setminus \{i, i^*\}}\pp{\wt{\mbf{v}}(i') - \gamma} \nonumber \\
        &= \pp{(\mbf{v}(i) - 1) - \gamma} + \pp{(\mbf{v}(i^*) + 1) - \gamma} + \sum_{i'\in [n]\setminus \{i, i^*\}}\pp{\mbf{v}(i') - \gamma} \nonumber \\
        &= \pp{\mbf{v}(i) - \gamma} - 1 + \pp{\mbf{v}(i^*) - \gamma} + \sum_{i'\in [n]\setminus \{i, i^*\}}\pp{\mbf{v}(i') - \gamma} \\
        &= \sum_{j\in [n]} \pp{\mbf{v}(i') - \gamma} - 1 \nonumber
    \end{align}
    \Cref{eq:mmo:bij_maj} holds since $\mbf{v}(i) > \gamma$ and $\mbf{v}(i^*) < \gamma$.
    Notice that $\mbf{x} + \bell$ is strictly greater that $\wt{\mbf{x}} + \bell$ on threshold $k-1$, so we get the contradiction $\mbf{x} + \bell \npreceq \wt{\mbf{x}} + \bell$.
\end{proof}

\section{AI Disclosure}\label{app:sec:ai}
We used Google Gemini 3 to assist with editing drafts.
The tool materially affected the abstract and related work (\Cref{subsec:intro:related_work}).
Drafts of each section were provided to Gemini with the request to provide grammatical support and streamlining the flow of writing.
The suggestions of Gemini were implemented in part, as the authors saw fit.

In addition to editorial support, the authors used  Claude (Sonnet 4.6) to assist in the proof of \Cref{lem:mmo:maj_min_bij}.
The use of Hall's theorem to establish $\mbf{x}(L_{k-1}) < \mbf{x}(L_k)$ was derived from a proof sketch by Claude.
The author identified the exchange lemma \ref{lem:mmo:exchange} as a sufficient condition for \Cref{lem:mmo:maj_min_bij}; Claude was unable to assist in showing this exchange lemma, so the authors queried (placeholder for person's name), who outlined the provided proof.
 
The authors verified the correctness and originality of all content generated by AI to the best of their ability, including references.

\end{document}